

\documentclass{siamart}



\usepackage{amsfonts}
\usepackage{graphicx}
\usepackage{epstopdf}
\usepackage{algorithmic}
\ifpdf
  \DeclareGraphicsExtensions{.eps,.pdf,.png,.jpg}
\else
  \DeclareGraphicsExtensions{.eps}
\fi

\newcommand{\TheTitle}{Form factor (Fourier shape transform) of polygon and polyhedron.}
\newcommand{\TheAuthors}{Joachim Wuttke}

\headers{Form factor of polygon and polyhedron.}{\TheAuthors}

\title{{\TheTitle}}

\author{
  Joachim Wuttke\thanks
 {J\"ulich Centre for Neutron Science (JCNS)
  at Heinz Maier-Leibnitz Zentrum (MLZ),
  Forschungszentrum J\"ulich GmbH,
  Lichtenbergstra\ss e 1, 85747 Garching, Germany
    (\email{j.wuttke@fz-juelich.de}, \url{http://www.fz-juelich.de/SharedDocs/Personen/JCNS/EN/Wuttke_J.html}).}
}

\usepackage{amsopn}


\ifpdf
\hypersetup{
  pdftitle={\TheTitle},
  pdfauthor={\TheAuthors}
}
\fi




\renewcommand{\v}[1]{\ensuremath{\mathbf{#1}}}
\renewcommand{\d}{\text{\rm d}}
\newcommand{\e}{\text{\rm e}}
\renewcommand{\r}{\v{r}}
\newcommand{\q}{\v{q}}
\newcommand{\n}{\v{\hat n}}
\newcommand{\V}{\v{V}}
\newcommand{\E}{\v{E}}
\newcommand{\R}{\v{R}}
\newcommand{\xabs}{v}
\newcommand{\x}{\v{\xabs}}
\newcommand{\pa}{\parallel}
\newcommand{\qpa}{\q_\pa}

\newcommand{\qrperp}{\q_\perp\r_\perp}

\newcommand{\qcross}{\q_\times}
\newcommand{\Plane}{\mathcal{E}}
\def\JJ{{\tilde J}}
\def\KK{{\tilde K}}
\DeclareMathOperator{\sinc}{sinc}
\DeclareMathOperator{\Ar}{Ar}
\DeclareMathOperator{\Vol}{Vol}
\newcommand{\Nabla}{\v{\nabla}}
\DeclareMathOperator{\Real}{Re}

\usepackage{stmaryrd}
\usepackage{mathtools}
\usepackage{amssymb}

\newsiamremark{remark}{Remark}

\hyphenation{pro-po-si-tion Born-Again}

\begin{document}

\maketitle

\begin{abstract}
  The Fourier transform of the indicator function of arbitrary polygons and polyhedra is computed
  for complex wavevectors.
  Using the divergence theorem and Stokes' theorem,
  closed expressions are obtained.
  Apparent singularities, all removable, are discussed in detail.
  Loss of precision due to cancellation near the singularities
  can be avoided by using series expansions.
\end{abstract}

\begin{keywords}
  Polygon, polyhedron, form factor, shape transform, Fourier transform
\end{keywords}

\begin{AMS}
  42B10 51M20 65D20
\end{AMS}

\section{Introduction}
The term \textit{``form factor''} has different meanings in science and in engineering.
Here, we are concerned with the form factor of a geometric figure
as defined in the physical sciences,
namely the \textit{Fourier transform} of the figure's indicator function,
also called the \textit{shape transform} of the figure.

This form factor has important applications in the emission, detection,
and scattering of radiation.
Two-dimensional shape transforms
are used in the theory of reflector antennas~\cite{LeMi83}.
The shape transform of three-dimensional nanoparticles is used to interpret
neutron and x-ray small-angle scattering (SAS, SANS, SAXS)~\cite{Ham10}.
A particularly rich multitude of mostly polyhedral shapes is observed
for particles grown on a substrate \cite{Hen05},
as observed by grazing-incidence neutron and x-ray small-angle scattering
(GISAS, GISANS, GISAXS)~\cite{ReLL09}.
Thence it comes that possibly the most extensive collection
of particle shape transforms published to this date
can be found in the documentation \cite{ReLL09,Laz08}
of a GISAXS simulation software called IsGISAXS.
The oldest specimens are the form factors of a cylinder and of a sphere,
worked out by Lord Rayleigh in his electromagnetic theory of light scattering~\cite{Ray81}.

Analytic expressions for form factors of polyhedra
can be obtained by straightforward triple integration, using Fubini's theorem.
However, except for very simple shapes like a rectangular box, oriented along the coordinate axes,
the resulting expressions look more complicated than one might expect for such a plain problem;
they do not reflect the symmetry of the geometric figure,
and they contain removable singularities that cause two difficulties:
Division by zero at the singularities,
unless analytic continuations are used,
and loss of arithmetic precision due to cancellation of terms near the singularities,
unless special precautions are taken.

Deriving and implementing numerically stable algorithms for a considerable collection
of Platonic solids, prisms, truncated prisms, pyramids, frusta, and more would be an immense labor.
It is preferable to derive the shape transform of arbitrary polygons and polyhedra
in generic terms,
and deal once and for all with all possible singularities.
For polygons, a generic expression for the form factor has been derived decades ago~\cite{LeMi83};
singularities were not discussed.
For polyhedra, a first-order approximation to the form factor is used in a novel
GISAXS simulation software, called HipGISAXS \cite{ChSL13}.
Here, exact expressions for the form factors of polygons and polyhedra will be derived.
Proper treatement of all singularities will lead to practicable algorithms
for computing at arbitrary wavevectors.
All results presented here have already found an application in another novel
GISAXS simulation software, called BornAgain \cite{ffp:ba}.

For a Fourier transform in the usual sense,
and for most applications, wavevectors are real: $\q\in\mathbb{R}^3$.
In GISAS, however, the incident and scattered radiation may undergo substantial absorption,
which can be modeled by an imaginary part of~$\q$.
Similarly,  an imaginary part of~$\q$ can be used to described
light amplification in a laser medium.
Therefore, we admit complex wavevectors $\q\in\mathbb{C}^3$.

Polygonal and polyhedral form factors have strong similarities in their mathematical structure,
reflecting the analogy of Stokes' theorem and the divergence theorem.
This paper strives to bring out these similarities
by a purposedly parallel treatment of the two- and the three-dimensional case.
It is organized as follows:
\Cref{S2fig} provides some basics for computing the form factor of a planar figure
embedded in three-dimensional space.
\Cref{S2poly} deals with the form factor of a polygon;
it provides a closed analytic expression and power series coefficients.
\Cref{S3fig} treats the form factor of three-dimensional figures;
\cref{S3poly} specializes to a polyhedron.
\Cref{Si} works out simplifications for figures with inversion centers.
TODO \cref{Stests}

As for notation:
Vector products (dot products and cross products) are linear in both components;
complex conjugation, where needed, is denoted explicitely.
And of course $0^0=1$ \cite{Knu92}.

\section{Form factor of a planar figure}\label{S2fig}

\begin{definition}[oriented plane]\label{Dplane}
  A \emph{plane},
  given by a normal unit vector~$\n$
  and a signed distance~$r_\perp$ from the origin,
  shall be denoted as
  \begin{equation}
    \Plane(\n,r_\perp)
    \coloneqq \{\r\in\mathbb{R}^3 \;|\; \r\n=r_\perp\}.
  \end{equation}
  The \emph{orientation} of the plane is induced by~$\n$ as follows:
  If the ordered triple $(\v{b}_1,\v{b}_2,\n)$ of pairwise orthogonal vectors
  is a positively oriented (right-handed) base of the~$\mathbb{R}^3$,
  then the ordered pair $(\v{b}_1,\v{b}_2)$ is a positively oriented base of~$\Plane(\n,r_\perp)$.
\end{definition}

  For a loop in the plane~$\Plane(\n,r_\perp)$,
  the following statements are equivalent:
  (1) The loop has a positive winding number.
  (2) The loop runs counterclockwise.
  (3) The loop fulfills the right-hand-rule with respect to~$\n$.

\begin{definition}[form factor of a 2d figure in 3d space]\label{Dff23}
  The form factor of a two-dimensional planar figure~$\Gamma\subset\Plane(\n,r_\perp)$,
  embedded in three-dimensional space,
  at a wavevector~$\q\in\mathbb{C}^3$ is
  \begin{equation}\label{E2ffdef}
    f(\q,\Gamma)
    \coloneqq \iint_\Gamma\!\d^2r_\pa\, \e^{i\q\r},
  \end{equation}
  where $\r\coloneqq \r_\perp+\r_\pa$, $\r_\perp\coloneqq r_\perp\n$,
  and $\r_\pa$ is given by the integration variables.
\end{definition}

\begin{definition}[vector decomposition induced by~$\n$]\label{Ddecompose}
  In a given plane, characterized by a normal vector~$\n$,
  the subscripts $\perp$ and~$\pa$, affixed to a vector~$\x\in\mathbb{C}^3$,
  indicate the projection of~$\x$ onto~$\n$ (\emph{perpendicular} to the plane)
  \begin{equation}
    \x_\perp\coloneqq (\x\n)\n,
  \end{equation}
  and the in-plane (\emph{parallel}) complement
  \begin{equation}\label{Expa}
    \x_\pa\coloneqq\x-\x_\perp,
  \end{equation}
  respectively. The subscript~$\times$ shall denote
  an in-plane vector normal to~$\x$,
  \begin{equation}\label{Excross}
    \x_\times\coloneqq\n\times\x_\pa.
  \end{equation}
\end{definition}

This definition is compatible with, and extends,
the notation of \cref{Dplane,Dff23}
where $\r$ was constructed as $\r_\perp+\r_\pa$.

If it happens that a given vector~$\x$ is exactly or almost parallel to~$\n$,
then the computation of~$\x_\pa$ according to~\cref{Expa} becomes inaccurate,
and can result in $\x_\pa$ having a relatively strong out-of-plane component.
To reduce this spurious component,
one can iterate
  \begin{equation}\label{Expait}
    \x_\pa^{(i+1)}\coloneqq\x_\pa^{(i)}-(\x_\pa^{(i)}\n)\n.
  \end{equation}
Furthermore, if $|\x_\pa|/|\x_\perp|$ is smaller than the machine epsilon,
then it is adequate to let $\x_\pa=\v{0}$.

\begin{proposition}[factorization of $f$]\label{Ptranslate}
  The form factor~\cref{E2ffdef} of a planar figure $\Gamma\subset\Plane(\n,r_\perp)$
  can be factorized as
  \begin{equation}\label{E2fac}
    f(\q,\Gamma)
    = \e^{i\qrperp} f(\qpa,\Gamma).
  \end{equation}
\end{proposition}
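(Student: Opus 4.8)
The plan is to work directly from the definition~\cref{E2ffdef} and to split the exponent $\q\r$ into a piece that is constant over~$\Gamma$ and a piece that rebuilds the integrand of $f(\qpa,\Gamma)$. First I would insert $\r=\r_\perp+\r_\pa$, as prescribed in \cref{Dff23}, to obtain $\q\r=\q\r_\perp+\q\r_\pa$. Then I would use the decomposition $\q=\q_\perp+\q_\pa$ of \cref{Ddecompose} together with the two orthogonality facts $\q_\pa\r_\perp=0$ and $\q_\perp\r_\pa=0$, which hold because $\r_\perp$ is a multiple of~$\n$ while $\q_\pa$ lies in the plane, and conversely $\r_\pa$ lies in the plane while $\q_\perp$ is a multiple of~$\n$ (using $\n\n=1$). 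This gives $\q\r_\perp=\q_\perp\r_\perp=\qrperp$ and $\q\r_\pa=\q_\pa\r_\pa$. Since moreover $\qpa\r_\perp=0$, the latter equals $\qpa\r$, so altogether $\q\r=\qrperp+\qpa\r$.

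The remaining step is immediate: $\e^{i\q\r}=\e^{i\qrperp}\,\e^{i\qpa\r}$, and the scalar $\e^{i\qrperp}$ does not depend on the integration variable~$\r_\pa$, so it comes out of the surface integral in~\cref{E2ffdef}. What is left under the integral is $\e^{i\qpa\r}$ integrated over~$\Gamma$, which is by definition $f(\qpa,\Gamma)$; this yields~\cref{E2fac}.

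I do not anticipate any genuine difficulty, but two small points are worth stating explicitly. First, all the manipulations use only bilinearity of the vector products, with no complex conjugation, as stipulated in the introduction, so the argument is valid verbatim for complex wavevectors $\q\in\mathbb{C}^3$, and the $\perp/\pa$ decomposition of \cref{Ddecompose} applies to complex vectors unchanged. Second, the factorization is meaningful because $\qpa$ depends only on~$\q$ and~$\n$, not on~$r_\perp$; the entire dependence on the plane's signed distance~$r_\perp$ has thus been isolated in the scalar prefactor $\e^{i\qrperp}$, which is exactly the content of the proposition.
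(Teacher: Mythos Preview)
Your proposal is correct and follows exactly the same approach as the paper: start from the definition, split $\q\r=\qrperp+\qpa\r$, and pull the constant factor $\e^{i\qrperp}$ out of the integral. The paper's proof is simply the one-line chain of equalities that your argument spells out in detail.
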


\begin{proof}
  Go back to the definition~\cref{E2ffdef} of $f(\q,\Gamma)$:
  \begin{equation}
    f(\q,\Gamma)
    = \iint_\Gamma\!\d^2r_\pa\, \e^{i\q\r}
    = \e^{i\qrperp}  \iint_\Gamma\!\d^2r_\pa\, \e^{i\qpa\r}
    = \e^{i\qrperp} f(\qpa,\Gamma).
  \end{equation}
\end{proof}

The factor~$f(\qpa,\Gamma)$ shall be called the \emph{in-plane form factor} of~$\Gamma$.

\begin{proposition}[continuity at $q_\pa=0$]\label{P2lim0}
  The in-plane form factor of a planar figure $\Gamma\subset\Plane(\n,r_\perp)$
  is continuous at $q_\pa=0$, and has the limit
  \begin{equation}\label{E2lim0}
    \lim_{q_\pa\to0} f(\qpa,\Gamma) = f(\v{0},\Gamma) = \iint_\Gamma\!\d^2r_\pa = \Ar(\Gamma),
  \end{equation}
  where $\Ar(\Gamma)$ is the area of~$\Gamma$.
\end{proposition}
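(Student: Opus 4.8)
The plan is to reduce the claim to an elementary estimate on the exponential integrand, using only that $\Gamma$ is bounded. First I would note that, because $\qpa$ is by construction the in-plane part of~$\q$, it is orthogonal to~$\n$: indeed $\qpa\n=\q\n-(\q\n)(\n\n)=0$ since $\n\n=1$ and the dot product is bilinear. Hence $\qpa\r_\perp=r_\perp(\qpa\n)=0$, so that in the defining integral~\cref{E2ffdef} only the in-plane part of~$\r$ survives in the exponent:
\[
  f(\qpa,\Gamma)=\iint_\Gamma\!\d^2r_\pa\,\e^{i\qpa\r_\pa}.
\]
Putting $\qpa=\v{0}$ gives $f(\v{0},\Gamma)=\iint_\Gamma\d^2r_\pa=\Ar(\Gamma)$, which is the second equality in~\cref{E2lim0}.

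For the continuity I would use a uniform bound on $\e^{i\qpa\r_\pa}-1$. Since $\Gamma$ is bounded there is an $R>0$ with $|\r_\pa|\le R$ on~$\Gamma$; combining this with the elementary inequality $|\e^z-1|\le|z|\,\e^{|z|}$, valid for all $z\in\mathbb{C}$ and applied to $z=i\qpa\r_\pa$ with $|z|\le|\qpa|\,|\r_\pa|\le|\qpa|R$, yields
\[
  \bigl|f(\qpa,\Gamma)-\Ar(\Gamma)\bigr|
  =\Bigl|\iint_\Gamma\!\d^2r_\pa\,\bigl(\e^{i\qpa\r_\pa}-1\bigr)\Bigr|
  \le\Ar(\Gamma)\,|\qpa|\,R\,\e^{|\qpa|R},
\]
and the right-hand side tends to~$0$ as $\qpa\to\v{0}$. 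This proves that $f(\cdot,\Gamma)$ is continuous at $\qpa=\v{0}$ with value $f(\v{0},\Gamma)=\Ar(\Gamma)$, i.e.~\cref{E2lim0}. One could equally invoke dominated convergence, the dominating function being the constant $\e^{|\qpa|R}\le\e^{R}$ for $|\qpa|\le1$; the explicit bound above has the minor extra benefit of exhibiting the linear rate of convergence in~$|\qpa|$.

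The one point that needs care — the ``main obstacle,'' though it is a mild one — is that $\q$, and therefore $\qpa$, is allowed to be complex, so that $|\e^{i\qpa\r_\pa}|=\e^{-\Imag(\qpa\r_\pa)}$ is in general not~$1$; this is precisely why boundedness of~$\Gamma$ enters, namely to keep the exponential controlled uniformly in a neighbourhood of $\qpa=\v{0}$. For the polygons treated in the sequel $\Gamma$ is compact, so such an~$R$ always exists; admitting unbounded figures of finite area would require only a slightly more careful choice of dominating function.
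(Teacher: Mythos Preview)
Your proof is correct and takes essentially the same approach as the paper: both arguments rest on the uniform convergence $\e^{i\qpa\r}\to1$ on the bounded figure~$\Gamma$, from which interchange of limit and integral gives continuity with value $\Ar(\Gamma)$. The paper merely asserts this uniform convergence in one line, whereas you make it explicit via the bound $|\e^z-1|\le|z|\,\e^{|z|}$ together with $|\r_\pa|\le R$, which additionally exhibits the linear rate in~$|\qpa|$ and makes transparent why complex~$\qpa$ causes no difficulty.
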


\begin{proof}
  The continuity and the first equality in~\cref{E2lim0} hold
  because for $\q_\pa\to\v{0}$ and $\r\in\Gamma$
  the convergence $\e^{i\q_\pa\r}\to1$ of the integrand of~\cref{Dff23} is uniform.
  The second equality is \cref{Dff23} with $\q_\pa=\v{0}$.
\end{proof}

\begin{proposition}[series expansions]\label{P2ser}
  The form factor of a planar figure $\Gamma$
  possesses for any $\q\in\mathbb{C}^3$
  two different absolutely convergent series expansions
  \begin{subequations} \label{E2sum}
    \begin{align}
    f(\q,\Gamma) \label{E2sum_f}
    &= \sum_{n=0}^\infty i^n f_n(\q,\Gamma),\\
    f(\q,\Gamma) \label{E2sum_p}
    &= \sum_{n=0}^\infty i^n \phi_n(\q,\Gamma)
    \end{align}
  \end{subequations}
  with coefficients
  \begin{subequations}
    \begin{align}
    \label{E2intcoeff}
     f_n(\q,\Gamma)
     &\coloneqq\displaystyle \iint_\Gamma\d^2r_\pa \frac{{(\q\r)}^n}{n!},
    \\
     \phi_n(\q,\Gamma)
     &\coloneqq\displaystyle \e^{i\qrperp}\iint_\Gamma\d^2r_\pa \frac{{(\qpa\r)}^n}{n!},
    \end{align}
  \end{subequations}
  which coincide for in-plane wave vectors, $f_n(\qpa)=\phi_n(\qpa)$.
  For generic~$\q$,
  \begin{subequations}
    \begin{align}
     f_n(\q,\Gamma) \label{E2_fn}
     &=\displaystyle
       \sum_{m=0}^n \frac{{(\qrperp)}^{n-m}}{(n-m)!} f_m(\qpa,\Gamma),\\
     \phi_n(\q,\Gamma) \label{E2_pn}
     &=\displaystyle \e^{i\qrperp} \phi_n(\qpa,\Gamma).
    \end{align}
  \end{subequations}
\end{proposition}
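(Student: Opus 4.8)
The plan is to derive both series by substituting the exponential power series into the definition \cref{E2ffdef}, to obtain the factorized coefficients from \cref{Ptranslate}, and to get the relations \cref{E2_fn,E2_pn} by a purely algebraic manipulation of the integrands.

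First I would treat \cref{E2sum_f}. Write $\e^{i\q\r}=\sum_{n=0}^\infty i^n(\q\r)^n/n!$ and insert this into \cref{E2ffdef}; the only issue is the interchange of summation and integration. Since $\Gamma$ is bounded, $|\q\r|$ has a finite supremum $M$ on $\Gamma$, so the Weierstrass $M$-test with $M_n=M^n/n!$ shows that $\sum_n i^n(\q\r)^n/n!$ converges uniformly on $\Gamma$ and may be integrated term by term. This yields \cref{E2sum_f} with the coefficients \cref{E2intcoeff}, and the bound $\sum_n|i^nf_n(\q,\Gamma)|\le\e^M\Ar(\Gamma)$ gives the claimed absolute convergence.

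For \cref{E2sum_p} I would start from the factorization $f(\q,\Gamma)=\e^{i\qrperp}f(\qpa,\Gamma)$ of \cref{Ptranslate}, apply the series just obtained to the in-plane argument $\qpa$, and multiply term by term by $\e^{i\qrperp}$. The $n$-th term is then $i^n\e^{i\qrperp}f_n(\qpa,\Gamma)=i^n\e^{i\qrperp}\iint_\Gamma\d^2r_\pa\,(\qpa\r)^n/n!$, which is exactly $i^n\phi_n(\q,\Gamma)$ by definition of $\phi_n$; absolute convergence carries over with the extra constant factor $|\e^{i\qrperp}|$. The coincidence $f_n(\qpa,\Gamma)=\phi_n(\qpa,\Gamma)$ is immediate because a purely in-plane wavevector has vanishing perpendicular component, so the prefactor in the definition of $\phi_n$ reduces to $1$; and \cref{E2_pn} follows from the same observation, evaluating the definition of $\phi_n$ at $\qpa$ and pulling the factor $\e^{i\qrperp}$ out of $\phi_n(\q,\Gamma)$.

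It remains to establish \cref{E2_fn}. The vector decomposition of \cref{Ddecompose} gives $\q\r=\qrperp+\qpa\r$, since the mixed products $\q_\perp\r_\pa$ and $\qpa\r_\perp$ vanish by orthogonality to $\n$. Expanding $(\q\r)^n$ by the binomial theorem, dividing by $n!$, integrating over $\Gamma$ term by term, and reindexing the sum produces \cref{E2_fn}. I expect the only points needing care to be the justification of the term-by-term integration — already covered by the uniform bound above — and the clean statement of the orthogonality identities that reduce $\q\r$ to $\qrperp+\qpa\r$; everything else is formal bookkeeping.
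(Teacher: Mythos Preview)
Your proposal is correct and follows essentially the same line as the paper's proof: expand the exponential, bound the terms to get absolute convergence, and use the binomial theorem on $\q\r=\qrperp+\qpa\r$ for \cref{E2_fn}. The only noteworthy differences are cosmetic: you invoke the Weierstrass $M$-test and a generic supremum $M=\sup_{\r\in\Gamma}|\q\r|$, whereas the paper bounds via the vertex radii $a,b$ of \cref{Eadef}; and you obtain \cref{E2sum_p} by first applying \cref{Ptranslate} and then expanding $f(\qpa,\Gamma)$, which has the small advantage of yielding \cref{E2_pn} immediately, while the paper expands $\e^{i\qpa\r}$ directly.
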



\begin{proof}
  Expand the exponential in the integrand of~\cref{E2ffdef}
  in $\q$ or~$\qpa$ to obtain \cref{E2sum_f} or~\cref{E2sum_p}.
  Define the radii
  \begin{equation}\label{Eadef}
    a\coloneqq\max_j|\V_{j}|
    \quad\text{and\/}\quad
    b\coloneqq\max_j|\V_{j\pa}|
  \end{equation}
  of a sphere respectively a circle that contain all vertices of~$\Gamma$.
  Then
  \begin{subequations}
   \begin{align}
    |f_n(\q,\Gamma)|
    &\displaystyle\le \iint_\Gamma\d^2r_\pa \frac{(qa)^n}{n!}
    = \frac{(qa)^n}{n!} \Ar(\Gamma),
    \\
    |f_n(\q,\Gamma)|
    &\displaystyle\le |\e^{i\qrperp}|\iint_\Gamma\d^2r_\pa \frac{(qb)^n}{n!}
    = \e^{-\text{Im}(q_\perp)r_\perp}\frac{(qb)^n}{n!} \Ar(\Gamma)
    \end{align}
  \end{subequations}
  proves the absolute convergence of both series.
  The factorization of $f_n$ in~\cref{E2_fn} is obtained
  by straightforward binomial expansion of $(\qpa\r+\q_\perp\r)^n$.
\end{proof}

\begin{remark}[choice of origin]
  It may be advisable to translate the origin before starting a numeric computation of~$f$.
  This can be done using a standard property of Fourier transforms
\begin{equation}\label{EFTtra}
  f(\q,\Gamma) = \e^{i\q\v{v}} f(\q,\Gamma-\v{v}).
\end{equation}
Two particular choices of the origin have special advantages:
An origin at the center of the enclosing circle minimizes~$a$,
whereas an origin at the center of gravity lets the expansion coefficient~$f_1$ vanish.
\end{remark}

\begin{remark}[termination of numeric summation]\label{R2terminate}
  In general, the integral formula \cref{E2intcoeff} is no practicable starting point
  for computing the~$f_m(\qpa,\Gamma)$.
  It allows us, however, to make the following observation:
  For odd~$m$, but not for even~$m$, it may happen that $f_m=0$
  for some~$\qpa\ne\v{0}$.
  In particular, $f_1\equiv0$ for all $\qpa$ if the origin is chosen at the center of gravity.
  Therefore the termination criterion in a numeric implementation of
  \cref{E2sum_f} or~\cref{E2sum_p}
  must only rely on terms with even~$m$.
\end{remark}

\section{Form factor of a polygon}\label{S2poly}

\begin{definition}[simple polygonal vertex chain]\label{DV}
  A \emph{simple polygonal vertex chain} of size~$J$
  is a sequence of points $\V_1,\ldots,\V_J$,
  made cyclic by the convention $\V_0\equiv\V_J$,
  that lie all in a plane,
  and when connected by edges $\overline{\V_0\V_1},\ldots,\overline{\V_{J-1},\V_J}$,
  concatenated in this order, form a non-intersecting loop.

  This loop is the border~$\partial\Gamma$ of a polygon $\Gamma(\V_1,\ldots,\V_J)$.

  The plane $\Plane(\n,r_\perp)$, given by $\V_1,\ldots,\V_J$,
  is oriented such that $\partial\Gamma$ has a winding number~$+1$
  with respect to~$\n$.
\end{definition}

Throughout the following, we assume that the origin is chosen such that
$\V_j\ne\v{0}$ for all~$j\in\{1,\ldots,J\}$.
For a given polygonal vertex chain,
the normal vector can then be computed as
\begin{equation}\label{En}
  \n = \frac{\V_{j-1}\times\V_{j}}{|\V_{j-1}\times\V_{j}|},
\end{equation}
and the location of the plane as
\begin{equation}\label{Erp}
  r_\perp = \n \V_j,
\end{equation}
with arbitrary $j\in\{1,\ldots,J\}$.
If vertex coordinates are ill-conditioned,
then it may be indicated to use \cref{En,Erp} with a special choice of~$j$,
or to average over several~$j$.

\begin{figure}[t]\label{Ftriangle}
  \centering
  \includegraphics[width=.47\linewidth]{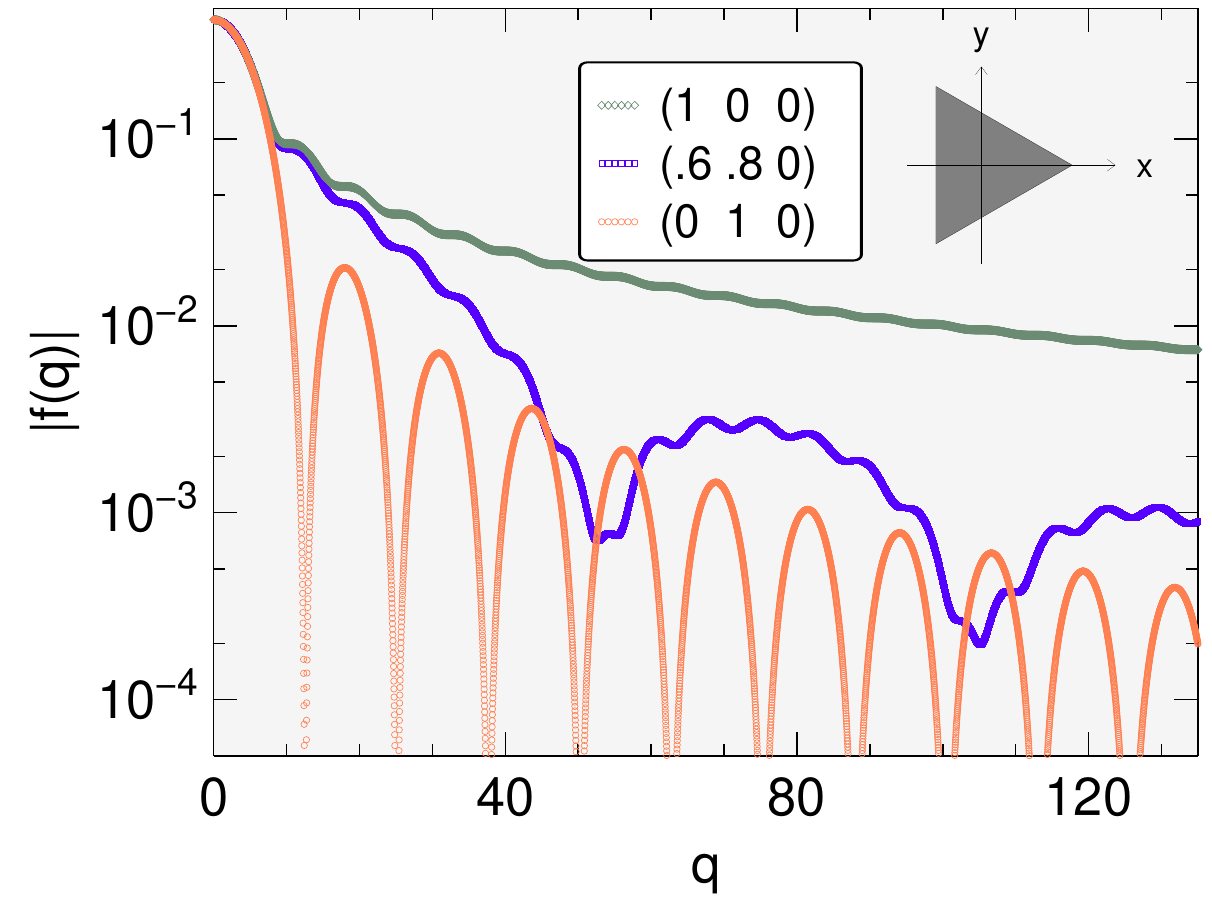}
  \hfill
  \includegraphics[width=.47\linewidth]{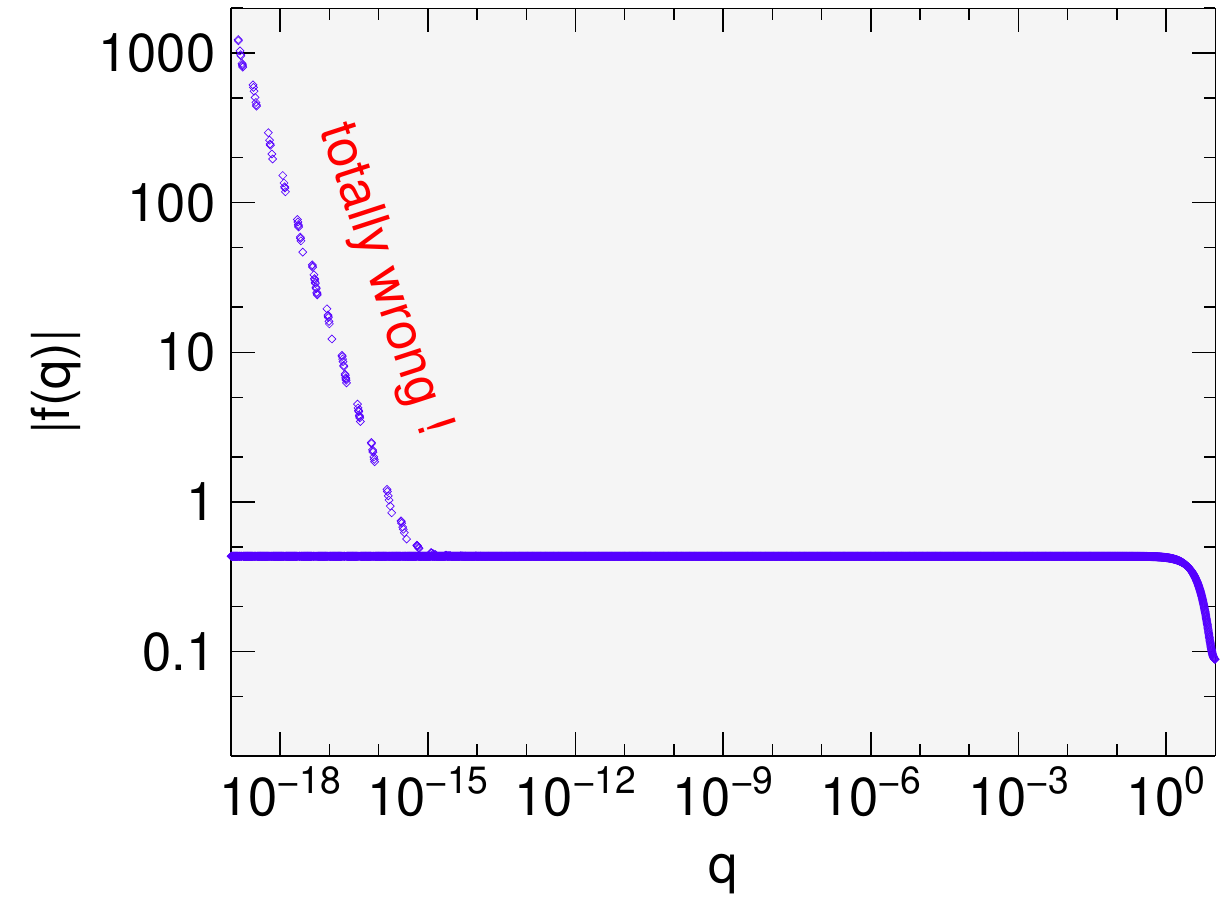}
  \caption{Form factor of an equilateral triangle
    as function of wavenumber~$q$ for three different wavevector directions $\v{\hat q}$,
    computed using the analytic expression \cref{E2ne0} of \cref{P2d}.
    Geometry: The triangle lies in the $xy$ plane.
    It is oriented so that an edge points in $x$ direction.
    The edge length is $L=1$.
    The second panel shows gross numeric errors at very small~$q$ for $\v{\hat q}=(.6,.8,0)$,
    as explained in \cref{L2const}.
}
\end{figure}

\begin{proposition}[form factor of a polygon]\label{P2d}
  A $J$-gon~$\Gamma\subset\Plane(\n,r_\perp)$
  be given by a simple polygonal vertex chain~$\V_1,\ldots,\V_{J}$.
  We abbreviate $\E_j\coloneqq(\V_{j}-\V_{j-1})/2$ and $\R_j\coloneqq(\V_{j}+\V_{j-1})/2$.
  Then the form factor for~$q_\pa\ne0$ is
  \begin{equation}\label{E2ne0}
    f(\q,\Gamma)
    = \frac{2}{iq_\pa^2}\qcross^*\cdot
      \sum_{j=1}^{J} \E_j \sinc(\q\E_j) \e^{i\q\R_j}.
  \end{equation}
\end{proposition}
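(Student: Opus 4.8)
The plan is to peel off the out-of-plane phase with \cref{Ptranslate} and then turn the remaining two-dimensional integral into a contour integral by Stokes' theorem. By \cref{Ptranslate} it suffices to evaluate the in-plane form factor $f(\qpa,\Gamma)=\iint_\Gamma\d^2r_\pa\,\e^{i\qpa\r}$; since $\qpa\r=\qpa\r_\pa$ this is a genuine planar integral over the flat region $\Gamma\subset\Plane(\n,r_\perp)$. I would represent its integrand as the $\n$-component of a curl, so that the area integral collapses onto $\partial\Gamma$. Concretely, I would use the vector potential
\[
  \v{A}(\r)\coloneqq\frac{1}{iq_\pa^2}\,\qcross^*\,\e^{i\qpa\r},
  \qquad q_\pa^2=|\qpa|^2=\qpa\qpa^*,
\]
in which, crucially, the \emph{conjugated} vector $\qcross^*=\n\times\qpa^*$ appears; note $\qpa^*$ is again in-plane, since $\qpa^*\n=\overline{\qpa\n}=0$. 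Using $\Nabla\times(\e^{i\qpa\r}\,\v{c})=(i\qpa\,\e^{i\qpa\r})\times\v{c}$ for constant $\v{c}$, the scalar triple product identity, and $(\n\times\qpa^*)\times\n=\qpa^*$ (a standard identity, as $\n$ is a unit vector orthogonal to $\qpa^*$), one finds $(\Nabla\times\v{A})\cdot\n=\frac{1}{q_\pa^2}\,\e^{i\qpa\r}\,\qpa\cdot\big((\n\times\qpa^*)\times\n\big)=\frac{1}{q_\pa^2}\,\e^{i\qpa\r}\,(\qpa\qpa^*)=\e^{i\qpa\r}$, exactly the integrand. The reason to use $\qcross^*$ rather than $\qcross$ is that it puts the manifestly nonzero $|\qpa|^2$ into the denominator instead of $\qpa\qpa$, which can vanish for complex $\qpa\ne\v{0}$: the holomorphic choice would carry a spurious singular variety, whereas with this choice the only residual singularity of the end formula sits at $\qpa=\v{0}$, where \cref{P2lim0} already gives the value $\Ar(\Gamma)$.

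Next I would invoke Stokes' theorem on the flat surface $\Gamma$ with area element $\d\v{S}=\n\,\d^2r_\pa$, oriented by the right-hand rule — which, by \cref{DV}, is precisely the sense in which the piecewise-linear loop $\partial\Gamma$ is traversed as $\V_0\to\V_1\to\dots\to\V_J=\V_0$. It yields $f(\qpa,\Gamma)=\iint_\Gamma(\Nabla\times\v{A})\cdot\d\v{S}=\oint_{\partial\Gamma}\v{A}\cdot\d\r$. I would then split $\partial\Gamma$ into its $J$ edges and parametrize edge~$j$ by $\r(t)=\R_j+t\E_j$, $t\in[-1,1]$, so that $\r(-1)=\V_{j-1}$, $\r(+1)=\V_j$, $\d\r=\E_j\,\d t$, and $\qpa\r=\qpa\R_j+t\,\qpa\E_j$. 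Edge~$j$ then contributes $\frac{1}{iq_\pa^2}(\qcross^*\E_j)\,\e^{i\qpa\R_j}\int_{-1}^{1}\e^{it\,\qpa\E_j}\,\d t$, and the elementary integral equals $2\sinc(\qpa\E_j)$ (with $\sinc(0)=1$). Summing the edge contributions gives
\[
  f(\qpa,\Gamma)=\frac{2}{iq_\pa^2}\,\qcross^*\cdot\sum_{j=1}^{J}\E_j\,\sinc(\qpa\E_j)\,\e^{i\qpa\R_j}.
\]

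Finally I would restore the general wavevector. The midpoint $\R_j$ and the half-edge $\E_j$ both lie in $\Plane(\n,r_\perp)$, so $\n\E_j=0$ and $\n\R_j=r_\perp$; hence $\q\E_j=\qpa\E_j$, $\qcross=\n\times\qpa=\n\times\q$, and $\q\R_j=\qpa\R_j+\q_\perp\R_j=\qpa\R_j+\q_\perp\r_\perp$. Multiplying the previous display by $\e^{i\qrperp}$, as dictated by the factorization \cref{E2fac}, turns each $\e^{i\qpa\R_j}$ into $\e^{i\q\R_j}$; together with $\sinc(\qpa\E_j)=\sinc(\q\E_j)$ this is exactly \cref{E2ne0}. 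The only genuinely delicate point in all of this is the one flagged above — taking the non-holomorphic potential, so that $q_\pa^2$ stands for $|\qpa|^2$ and the hypothesis can be the harmless $q_\pa\ne0$ rather than $\qpa\qpa\ne0$; the rest is routine, the chief pitfall being sign errors in the triple products and in matching the Stokes orientation to \cref{DV}.
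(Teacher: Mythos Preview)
Your proof is correct and follows essentially the same route as the paper: Stokes' theorem applied to the vector field $\qcross^*\,\e^{i(\cdot)\r}$, followed by the same edge parametrization $\r=\R_j+t\E_j$ and the elementary $\sinc$ integral. The only cosmetic difference is that you first peel off the perpendicular phase via \cref{Ptranslate} and restore it at the end, whereas the paper applies Stokes directly with the full $\q$ in the exponent; your added remark on why $\qcross^*$ (rather than $\qcross$) is the right choice for complex $\q$ is a nice point the paper leaves implicit.
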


\begin{proof}
  We parametrize the edges of the polygon by $\r_j(\lambda)\coloneqq \R_j + \E_j\lambda$
  with $-1\le\lambda\le+1$.
  Stokes's theorem then takes the form
  \begin{equation}\label{EStokes}
    \iint_\Gamma\!\d r^2\,\n\cdot(\Nabla\times\v{G})
     = \oint_{\partial\Gamma} \d\v{r}\cdot\v{G}
     = \sum_{j=1}^{J} \int_{-1}^{+1}\!\d\lambda\, \frac{\d\r_j}{\d\lambda}\,\v{G}
     = \sum_{j=1}^{J} \E_j \int_{-1}^{+1}\!\d\lambda\,\v{G}.
  \end{equation}
  With the choice $\v{G}\coloneqq\qcross^*\, \e^{i\q\r}$, this yields
  \begin{equation}
    iq_\pa^2 f(\q,\Gamma)
    = \qcross^*\cdot
    \sum_{j=1}^{J} \E_j \int_{-1}^{+1}\!\d\lambda\, \e^{i\q\r}.
  \end{equation}
  With $q_\pa\ne0$:
  \begin{equation}\label{E2alt1a}
      f(\q,\Gamma)
      = \displaystyle\frac{1}{iq_\pa^2}\qcross^*\cdot
      \sum_{j=1}^{J} \E_j \frac{\e^{i\q(\R_j+\E_j)}-\e^{i\q(\R_j-\E_j)}}{i\q\E_j},
  \end{equation}
  which can easily be brought into the form~\cref{E2ne0}.
\end{proof}

\Cref{Ftriangle} illustrates the so determined form for about the simplest polygon,
an equilateral triangle.
$|f(q)|$ is plotted as function of $q$ for three different directions~$\v{\hat q}$.

\begin{remark}[sinc is accurate]\label{Rsinc}
  The cardinal sine function
  \begin{equation}\label{Esinc}
    \sinc(z) \coloneqq
    \left\{\begin{array}{ll}
    \displaystyle \frac{\sin(z)}{z}
    &\text{~for~}z\ne0,\\[2.3ex]
    \displaystyle 1
    &\text{ otherwise,}
    \end{array}\right.
  \end{equation}
  is best implemented by literally following this definition.
  Any available implementation of $\sin(z)$ will have full floating-point accuracy
  for $|z|\to0$, and so $\sin(z)/z$ will have.
\end{remark}

\begin{remark}[edge-independent terms cancel]\label{L2const}
  Since the vertices of~$\Gamma$ form a closed loop,
  the sum of the edge vectors vanishes,
  \begin{equation}\label{E2const}
    \sum_{j=1}^{J} \E_j = 0.
  \end{equation}
  The sum over~$j$ in~\cref{E2ne0} contains a factor $\sinc(\q\E_j)\exp(i\q\R_j)$.
  The leading term in a $q_\pa$-expansion of this factor is $\exp(i\qrperp)$.
  It can be drawn in front of the sum,
  which then vanishes per~\cref{E2const}.
  If the next-to-leading terms are relatively small,
  then the cancellation of the leading term will cause a loss of accuracy in the resulting
  form factor.
  Therefore, the analytic result~\cref{E2ne0} is not practicable for small~$q_\pa$,
  where it must be replaced by the series expansion~\cref{E2sum_p}.
  Alternatively, if not only $q_\pa$ but also $q_\perp$ is small,
  then the $q$-expansion~\cref{E2sum_f} can be used.
  Expansion coefficients will be provided below in~\cref{P2coeff}.
\end{remark}

The second panel of \cref{Ftriangle} demonstrates how this cancellation shows up
in a double-precision implementation of~\cref{E2ne0}.
For $q_\pa L$ close to or below the machine epsilon of $2\cdot10^{-16}$,
resulting values for some~$q_\pa$ are wrong by $\mathcal{O}((Lq_\pa)^{-1})$.

\begin{remark}[area formula]
  The case $q_\pa=0$ is covered by \cref{P2lim0}.
  The area of a polygon can be conveniently computed
  using the \textit{surveyor's formula} \cite{Bra86},
  \begin{equation}\label{ESurvey}
    \Ar(\Gamma) = \frac{1}{2}\,\n\cdot\sum_{j=1}^{J} \V_{j-1}\times\V_{j},
  \end{equation}
  which is based on a triangular tesselation.
\end{remark}

\begin{remark}[relation to literature result]\label{Rlit}
  A closed expression for the form factor of the polygon
  is known since long~\cite[Eq.~6]{LeMi83}.
  In our notation and after a few obvious simplifications, it reads
  \begin{equation}\label{ELeMi}
    f(\q,\Gamma) = \n\cdot\sum_{j=1}^{J}
    \frac{\E_{j-1}\times\E_{j}}{(\q\E_{j-1})(\q\E_{j})} \e^{i\q\V_j}.
  \end{equation}
  This expression is beautiful for being short and symmetric.
  However, for each~$j$   there are two $\q$~planes for which the denominator vanishes.
  A practical implementation of~\cref{ELeMi} that takes proper care of these singularities
  would be more complicated than an algorithm based on our \cref{P2d}.
\end{remark}

\begin{proof}
  Let us demonstrate the equivalence of~\cref{ELeMi} with our Proposition~\ref{P2d}.
  Start from \Cref{E2alt1a}.
  In the exponential functions, insert the definitions of $\E_j$ and $\R_j$:
  \begin{equation}
    f(\qpa,\Gamma)
    =  \frac{\qcross^*}{iq_\pa^2}\cdot
    \sum_{j=1}^{J} \frac{\E_j}{i\qpa\E_j}
    \left( \e^{i\qpa\V_{j}}-\e^{i\qpa\V_{j-1}}\right).
  \end{equation}
  Shuffle indices~$j+1\to j$, use $\q\E_j=\qpa\E_j$,
  and employ standard vector identities:
  \begin{subequations}
    \begin{align}
      f(\q,\Gamma)
      &= \displaystyle  \frac{\qcross^*}{iq_\pa^2}\cdot
      \sum_{j=1}^{J}
      \left( \frac{\E_{j-1}}{i\q\E_{j-1}} - \frac{\E_j}{i\q\E_j} \right) \e^{i\q\V_j}\\
\label{Evecid_i}
      &= \displaystyle - \frac{\qcross^*}{q_\pa^2}\cdot
      \sum_{j=1}^{J}
      \frac{\E_{j-1}(\qpa\E_{j})-\E_{j}(\qpa\E_{j-1})}
           {(\q\E_j)(\q\E_{j-1})} \e^{i\q\V_j}\\
      &= \displaystyle - \frac{\qcross^*}{q_\pa^2}\cdot
      \sum_{j=1}^{J}
      \frac{\qpa\times(\E_{j}\times\E_{j-1})}
           {(\q\E_j)(\q\E_{j-1})} \e^{i\q\V_j}\\
\label{Evecid_f}
      &= \displaystyle - \frac{\qcross^*\times\qpa}{q_\pa^2}\cdot
      \sum_{j=1}^{J}
      \frac{\E_{j}\times\E_{j-1}}
           {(\q\E_j)(\q\E_{j-1})} \e^{i\q\V_j}
    \end{align}
  \end{subequations}
  Insert the definition of~$\qcross$, employ another vector identity,
  use $\n\qpa=0$, and obtain~\cref{ELeMi}.
\end{proof}

\begin{figure}[t]\label{F2ser}
  \centering
  \includegraphics[width=.47\linewidth]{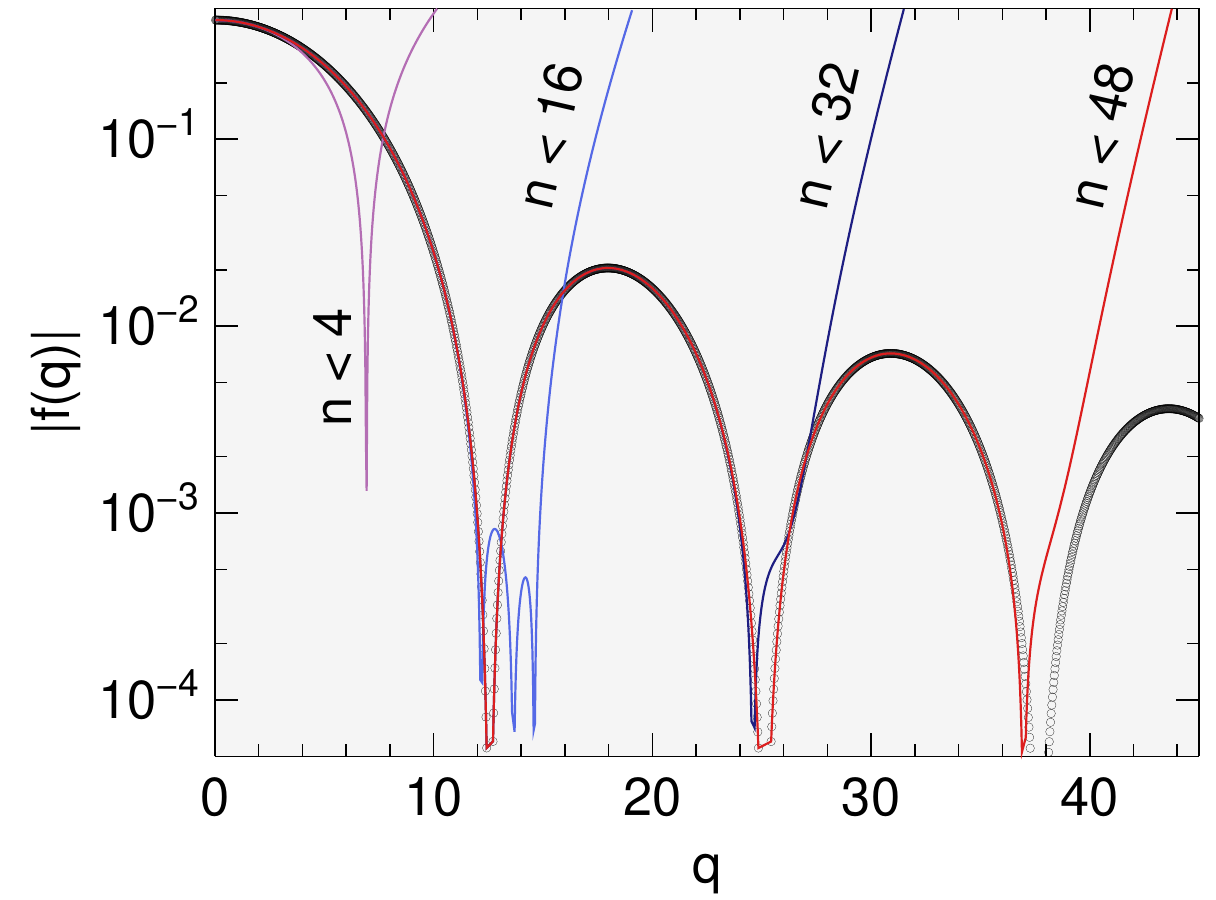}
  \caption{Form factor of the equilateral triangle of \cref{Ftriangle},
    as function of wavenumber~$q$ for wavevector direction $\v{\hat q}=(0,1,0)$.
    The chain of black circles is computed using the analytic expression \cref{E2ne0}.
    The colored curves are computed using the series expansion~\cref{E2sum_f}
    with coefficients~\cref{E2coeff} up to the indicated orders~$n$.
}
\end{figure}

\begin{proposition}[expansion coefficients]\label{P2coeff}
  The coefficients of the series expansion~\cref{E2sum_f}
  of the form factor can be computed as
  \begin{equation}\label{E2coeff}
    f_n(\q,\Gamma)
    = \frac{2}{q_\pa^2}\, \qcross^*\cdot \sum_{j=1}^{J} \,\E_j
    \sum_{l=0}^{(n+1)/2}
    \frac{{\left(\q\E_j\right)}^{2l}}{(2l\!+\!1)!}
    \frac{{\left(\q\R_j\right)}^{n+1-2l}}{(n\!+\!1\!-\!2l)!}.
  \end{equation}
\end{proposition}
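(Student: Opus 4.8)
The plan is to run the argument behind \cref{P2d} with the exponential replaced by a single monomial. The proof of \cref{P2d} hinges on the fact that the vector field $\v{G}=\qcross^*\,\e^{i\q\r}$ satisfies $\n\cdot(\Nabla\times\v{G})=iq_\pa^2\,\e^{i\q\r}$; the same curl computation shows that for \emph{any} polynomial $g$,
\[
  \n\cdot\bigl(\Nabla\times(\qcross^*\,g(\q\r))\bigr)=q_\pa^2\,g'(\q\r),
\]
the only vector identity involved being $\n\cdot(\q\times\qcross^*)=q_\pa^2$, which is used implicitly in \cref{P2d}. Choosing the antiderivative $g(u)=u^{n+1}/(q_\pa^2(n+1)!)$, i.e.\ setting $\v{G}_n\coloneqq\qcross^*\,(\q\r)^{n+1}/(q_\pa^2(n+1)!)$, therefore yields a vector field with $\n\cdot(\Nabla\times\v{G}_n)=(\q\r)^n/n!$, which is exactly the integrand of \cref{E2intcoeff}. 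This step requires $q_\pa\ne0$; at $q_\pa=0$ formula \cref{E2coeff} is to be read through its removable singularity, the value $f_n(\v{0},\Gamma)$ being taken directly from \cref{E2intcoeff} as in \cref{P2lim0}.

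Next I would apply Stokes' theorem and parametrize the edges exactly as in \cref{EStokes}, with $\r_j(\lambda)=\R_j+\E_j\lambda$ so that $\q\r_j(\lambda)=\q\R_j+\lambda\,\q\E_j$. This gives
\[
  f_n(\q,\Gamma)=\oint_{\partial\Gamma}\!\d\r\cdot\v{G}_n
  =\frac{\qcross^*}{q_\pa^2}\cdot\sum_{j=1}^{J}\E_j\int_{-1}^{+1}\!\d\lambda\;
    \frac{(\q\R_j+\lambda\,\q\E_j)^{n+1}}{(n+1)!}.
\]
It remains to expand the power $(\q\R_j+\lambda\,\q\E_j)^{n+1}$ by the binomial theorem and integrate over~$\lambda$. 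Since $\int_{-1}^{+1}\lambda^p\,\d\lambda$ vanishes for odd~$p$ and equals $2/(p+1)$ for even $p=2l$, only even powers of~$\lambda$ survive; the index $l$ then runs from $0$ to $\lfloor(n+1)/2\rfloor$ (which is what the upper bound ``$(n+1)/2$'' in \cref{E2coeff} abbreviates), and the surviving coefficient collapses via $\binom{n+1}{2l}/(n+1)!\cdot 2/(2l+1)=2/((2l+1)!\,(n+1-2l)!)$. Collecting these terms is precisely \cref{E2coeff}; the cases $\q\E_j=0$ or $\q\R_j=0$ are covered by the convention $0^0=1$.

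An alternative --- and arguably more obvious --- first attempt is to expand the already-proven formula rather than re-derive it: start from the intermediate identity $iq_\pa^2\,f(\q,\Gamma)=\qcross^*\cdot\sum_j\E_j\int_{-1}^{+1}\d\lambda\,\e^{i\q\r_j(\lambda)}$ in the proof of \cref{P2d}, Taylor-expand $\e^{i\q\r_j(\lambda)}$ in powers of $\q\R_j$ and $\q\E_j$, integrate termwise over~$\lambda$, reindex the sum so that the factor $i^{-1}$ is absorbed into the power of~$i$ (the leftover lowest-order term vanishes because the edge vectors of a closed loop sum to zero, \cref{E2const}), and match the coefficient of~$i^n$ against \cref{E2sum_f}. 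This produces the same formula, but needs one extra ingredient: that the coefficients of~$i^n$ are uniquely determined. The cleanest way to secure this is a scaling argument --- replacing $\q$ by $t\q$ for real~$t$ turns both expansions into genuine power series in~$t$ whose coefficients must coincide, using that the right-hand side of \cref{E2coeff} is homogeneous of degree~$n$ in~$\q$. I do not expect a deep obstacle along either route: the work is essentially the binomial and factorial bookkeeping, together with --- for the second route --- the coefficient-matching step just described.
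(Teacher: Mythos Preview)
Both routes you sketch are correct. Your \emph{second} route --- expand the exponential in the already-proven closed form, integrate termwise, use the closed-loop identity \cref{E2const} to dispose of the leftover lowest-order term, and reindex --- is precisely the paper's own proof; the paper merely evaluates the $\lambda$-integral first (obtaining the $\sinc$ and $\exp$ factors of \cref{E2ne0}) before expanding, but the bookkeeping is identical. The coefficient-matching step you flag is indeed taken for granted in the paper; your scaling argument is the right way to make it rigorous.

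Your \emph{first} route is a genuinely different and in some ways cleaner argument: applying Stokes' theorem directly to the monomial integrand of \cref{E2intcoeff} via the vector field $\v{G}_n=\qcross^*(\q\r)^{n+1}/\bigl(q_\pa^2(n{+}1)!\bigr)$ computes each $f_n$ individually, so neither the cancellation \cref{E2const} nor any coefficient-matching is needed at all. The paper's route, by contrast, has the practical virtue of reusing the closed formula \cref{E2ne0} rather than re-running Stokes, and of making the connection between the analytic expression and its series visible.
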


\begin{proof}
  Expand the functions $\sinc(\q\E)$ and $\exp(i\q\R)$ in~\cref{E2ne0}:
  \begin{equation}
    f(\q,\Gamma)
    = \frac{2}{i q_\pa^2} \qcross^*\cdot \sum_{j=1}^{J} \,\E_j\,
    \sum_{l=0}^\infty i^{2l} \frac{{\left(\q\E_j\right)}^{2l}}{(2l\!+\!1)!}
    \sum_{\mu=0}^\infty \frac{\left(i\q\R_j\right)^\mu}{\mu!}.
  \end{equation}
  Sort by powers of $q$:
  \begin{equation}\label{E2ser8}
    f(\q,\Gamma)
    = \sum_{\nu=0}^\infty i^{\nu-1} \:
     \frac{2}{q_\pa^2}\, \qcross^*\cdot \sum_{j=1}^{J} \,\E_j\,
    \sum_{l=0}^{2l\le \nu} \frac{{\left(\q\E_j\right)}^{2l}}{(2l\!+\!1)!}
    \frac{{\left(\q\R_j\right)}^{\nu-2l}}{(\nu\!-\!2l)!}.
  \end{equation}
  For $\nu=0$, the sum over~$l$ yields the constant~1,
  which cancels per \cref{E2const} under the sum over~$j$.
  Therefore the outer sum may as well start at~$\nu=1$.
  Substitute $n=\nu-1$ to obtain the series~\cref{E2sum_f} with coefficients~\cref{E2coeff}.
\end{proof}

\Cref{F2ser} shows that the series expansion works well even beyond the first few oscillations
in $f(|q|)$, provided a sufficient number of terms is summed.

\begin{remark}[area formula from expansion coefficient]\label{R2dir}
  From \cref{E2intcoeff} we know that $f_0(\q,\Gamma)=\Ar(\Gamma)$.
  On the other hand, per \cref{P2coeff},
  \begin{equation}\label{E2coef0}
    f_0(\q,\Gamma)
    = \frac{2}{q_\pa^2} \qcross^*\cdot \sum_{j=1}^{J} \E_j\: (\q\R_j).
  \end{equation}
  It was not necessary for the above proofs,
  but may nevertheless be interesting to explicitly show how \cref{E2coef0}
  can be algebraically transformed into the $\q$-independent form~\cref{ESurvey}.

  Insert the definitions of $\E_j$ and $\R_j$ into~\cref{E2coef0}:
  \begin{equation}
    f_0(\q,\Gamma)
    = \displaystyle \frac{\qcross^*}{2q_\pa^2}\cdot\sum_{j=1}^{J}
    (\V_{j}-\V_{j-1}) (\q(\V_{j}+\V_{j-1})).
  \end{equation}
  Multiply out, and shuffle indices $(j\!-\!1)\to j$ for some terms under the sum:
  \begin{equation}\label{Ecrossdotdiff}
    f_0(\q\Gamma)
    = \displaystyle \frac{\qcross^*}{2q_\pa}\cdot\sum_{j=1}^{J}
    \left[\V_{j} (\q\V_{j-1}) -\V_{j-1} (\q\V_{j})\right].
  \end{equation}
  From here, simplify using the same standard vector identities as were employed in \cref{Rlit}.
  Use $f_0(\q,\Gamma)=\Ar(\Gamma)$ to obtain~\cref{ESurvey}.
\end{remark}

\begin{remark}[resulting algorithm]\label{R2final}
  Let us summarize \cref{S2fig,S2poly} by outlining an algorithm for reliably computing
  the form factor of any polygon~$\Gamma$, given through an oriented vertex chain.
  The algorithm involves preselected constants $c$ and $c_\pa$
  that determine when to use a series expansion instead of the analytic formula.
  Do not use this algorithm if the symmetry of~$\Gamma$ allows for a simpler computation,
  as discussed below in \cref{Si}.

  If the input coordinates cannot be trusted:
  Test whether all vertices lie in a plane;
  otherwise terminate with error message.
  Check whether the coordinate origin lies well inside~$\Gamma$;
  otherwise determine a new origin, for instance from the center of gravity of~$\Gamma$,
  and apply \cref{EFTtra} at the end of the computation.

  Compute $\E_j$, $\R_j$, $\n$ and $r_\perp$, and decompose~$\q$ into $\qpa+\q_\perp$.
  If $q_\pa=0$, then return $f(\q,\Gamma)=\e^{i\qrperp}\Ar(\Gamma)$.
  If $a q<c$, then sum the series~\cref{E2sum_f} with $f_0(\q,\Gamma)=\e^{i\qrperp}\Ar(\Gamma)$,
  and other coefficients $f_n(\q,\Gamma)$ given by~\cref{E2coeff}.
  Terminate the summation and return the result if
  a term for even~$n$, relative the absolute value of the sum acquired so far,
  falls below the machine epsilon.
  If $a q_\pa<c_\pa$, then sum the series~\cref{E2sum_p} with coefficients $f_n(\qpa,\Gamma)$
  still given by~\cref{E2coeff}.
  Apply the same termination criterion as in the previous case, and return the result.
  In the remaining cases, return the result of the direct computation~\cref{E2ne0}.
\end{remark}

\section{Form factor of a three-dimensional figure}\label{S3fig}

\begin{definition}[form factor]
  The \emph{form factor} of a figure~$\Pi\subset\mathbb{R}^3$
  at a \emph{wavevector}~$\q\in\mathbb{C}^3$ is the Fourier transform
  of its indicator function, given by the integral
  \begin{equation}\label{E3ffdef}
    F(\q,\Pi)
    = \iiint_\Pi\!\d^3r\, \e^{i\q\r}.
  \end{equation}
\end{definition}

\begin{proposition}[continuity at $q=0$]\label{P3lim0}
  The form factor of a figure~$\Pi$ is continuous at $q=0$, and has the limit
  \begin{equation}\label{E3lim0}
    \lim_{q\to0} F(\qpa,\Pi) = F(\v{0},\Pi) = \iiint_\Pi\!\d^3r = \Vol(\Pi),
  \end{equation}
  where $\Vol(\Pi)$ is the volume of~$\Pi$.
\end{proposition}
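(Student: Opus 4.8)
The plan is to follow the proof of \cref{P2lim0} almost verbatim, replacing the surface integral by the volume integral \cref{E3ffdef}. First I would dispose of the second equality in \cref{E3lim0}: substituting $\q=\v{0}$ into \cref{E3ffdef} turns the integrand into~$1$, so $F(\v{0},\Pi)=\iiint_\Pi\d^3r=\Vol(\Pi)$, where the last step is merely the definition of the volume of~$\Pi$ as its Lebesgue measure. (As everywhere in this paper, $\Pi$ is tacitly a bounded figure, so this measure is finite; for a polyhedron that is automatic.)

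For the limit and the continuity at $\q=\v{0}$, I would show that the integrand converges to~$1$ \emph{uniformly} on~$\Pi$ as $\q\to\v{0}$. Let $a$ denote the radius of a ball centered at the origin that contains~$\Pi$, so that $|\r|\le a$ for all $\r\in\Pi$. From the elementary estimate $|\e^w-1|\le|w|\,\e^{|w|}$ for $w\in\mathbb{C}$, applied with $w=i\q\r$, one obtains $|\e^{i\q\r}-1|\le|\q|\,a\,\e^{|\q|a}$, a bound independent of~$\r$ that vanishes as $|\q|\to0$. Hence
\[
  \bigl|F(\q,\Pi)-\Vol(\Pi)\bigr|
  \le\iiint_\Pi\d^3r\,\bigl|\e^{i\q\r}-1\bigr|
  \le|\q|\,a\,\e^{|\q|a}\,\Vol(\Pi)\;\xrightarrow{\;\q\to\v{0}\;}\;0,
\]
which simultaneously proves continuity of~$F$ at $\q=\v{0}$ and identifies the limiting value as $\Vol(\Pi)$.

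I do not anticipate any genuine obstacle; the only point that requires attention is the boundedness of~$\Pi$, without which the uniform-convergence argument (equivalently, dominated convergence on a finite-measure domain) would break down. Everything else is the word-for-word three-dimensional counterpart of \cref{P2lim0}, with $\iint_\Gamma\d^2r_\pa$ replaced by $\iiint_\Pi\d^3r$ and $\Ar$ by $\Vol$.
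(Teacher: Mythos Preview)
Your proposal is correct and is precisely the approach the paper takes: it states that the proof is ``fully analogous to that of \cref{P2lim0}'', i.e.\ uniform convergence of $\e^{i\q\r}\to1$ on the bounded set~$\Pi$, followed by evaluation at $\q=\v{0}$. Your explicit estimate $|\e^{i\q\r}-1|\le|\q|\,a\,\e^{|\q|a}$ just spells out the uniformity that the paper leaves implicit.
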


The proof is fully analogous to that of \cref{P2lim0}.
The later Lord Rayleigh had already noted about
the scattering of light by small particles \cite{Ray71b}:
``The leading term, we have seen, depends only on the
volume ; but the same would not be true for those that follow \ldots
there is no difficulty in proceeding further; but I have not arrived at any results of interest.''

\begin{proposition}[series expansion]\label{P3ser}
  The form factor of a figure $\Pi$
  possesses for any $\q\in\mathbb{C}^3$ the absolutely convergent series expansion
  \begin{equation}\label{E3series}
    F(\q,\Pi)
    = \sum_{n=0}^\infty i^n F_n(\q,\Pi)
  \end{equation}
  with coefficients
  \begin{equation}\label{E3intcoeff}
     F_n(\q,\Pi)
     \coloneqq \iiint_\Pi\d^3r \frac{{(\q\r)}^n}{n!}.
  \end{equation}
\end{proposition}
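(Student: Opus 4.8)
The plan is to mimic, in three dimensions, the proof of \cref{P2ser}. First I would insert the power series of the exponential, $\e^{i\q\r}=\sum_{n=0}^\infty i^n{(\q\r)}^n/n!$, into the defining integral~\cref{E3ffdef}. Interchanging summation and integration then yields the series~\cref{E3series} with the coefficients~\cref{E3intcoeff} literally as written. The only thing that has to be checked is that this interchange is legitimate, which follows once we establish that the series of integrals converges absolutely; that bound is the real content of the proof.

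To get the bound, let $a\coloneqq\sup_{\r\in\Pi}|\r|$ be the radius of an origin-centered ball containing~$\Pi$, finite because $\Pi$ is bounded, and write $q\coloneqq|\q|$ for the (Hermitian) norm of the wavevector. Since every $\r\in\Pi$ is real, the Cauchy--Schwarz inequality gives $|\q\r|\le q|\r|\le qa$, so that
\begin{equation}
  |F_n(\q,\Pi)| \le \iiint_\Pi\!\d^3r\,\frac{(qa)^n}{n!}
  = \frac{(qa)^n}{n!}\,\Vol(\Pi).
\end{equation}
Summing over $n$ gives $\sum_{n=0}^\infty |F_n(\q,\Pi)| \le \e^{qa}\,\Vol(\Pi) < \infty$, which proves the asserted absolute convergence and at the same time justifies the term-by-term integration used to obtain~\cref{E3series,E3intcoeff}.

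I do not expect any genuine obstacle here; the argument is routine and strictly parallel to the planar case, the triple integral playing the role of the double integral and the enclosing ball replacing the enclosing circle. The one point deserving a word of care is the standing assumption that $\Pi$ be bounded, so that $a<\infty$: without it the enclosing-sphere estimate, and hence the whole scheme, fails. (An analogue of \cref{R2terminate} concerning which coefficients may vanish could be appended, but it is not required for the statement.)
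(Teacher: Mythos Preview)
Your proposal is correct and follows essentially the same approach as the paper: expand the exponential in~\cref{E3ffdef}, bound each term by $(qa)^n\Vol(\Pi)/n!$ using an enclosing sphere, and conclude absolute convergence. The paper's proof is terser and, for the radius~$a$, simply refers back to the vertex-based definition~\cref{Eadef}; your choice $a\coloneqq\sup_{\r\in\Pi}|\r|$ is in fact the more natural one for a general bounded figure, but the argument is otherwise identical.
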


\begin{proof}
  Expand the exponential function in the integrand of~\cref{E3ffdef}
  to obtain \cref{E3series}.
  Use the radius~$a$ defined in \cref{Eadef}
  to derive the bound
  \begin{equation}
    |F_n(\q,\Pi)|
    \le \iiint_\Pi\d^3r \frac{(qa)^n}{n!}
    = \frac{(qa)^n}{n!} \Vol(\Pi).
  \end{equation}
  This proves the absolute convergence of the series
  and thereby the existence and uniqueness of~\cref{E3series}.
\end{proof}

\begin{remark}[termination of numeric summation]\label{R3terminate}
  The same considerations as in \cref{R2terminate} apply:
  coefficients~$F_n$ may vanish for odd~$n$;
  therefore the termination criterion in a numeric implementation of~\cref{E3series}
  must only rely on terms with even~$n$.
\end{remark}

\section{Form factor of a polyhedron}\label{S3poly}

\begin{proposition}[form factor of a polyhedron]\label{P3d}
  An orientable polyhedron~$\Pi$ be given by its $K$ faces~$\Gamma_k$
  ($k=1,\ldots,K$).
  Each face $\Gamma_k\in\Plane(\n_k,r_{\perp k})$ be an $J_k$-gon,
  given by the simple polygonal vertex chain~$\V_{k1},\ldots,\V_{kJ_k}$,
  and with $\n_k$ pointing towards the outside of\/~$\Pi$.
  Then the form factor for $q\ne0$ is
  \begin{equation}\label{E3ne0}
    F(\q,\Pi)
    = \frac{1}{iq^2}\, \q^*\cdot \sum_{k=1}^K \n_k \:f(\q,\Gamma_k).
  \end{equation}
\end{proposition}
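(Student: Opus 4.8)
The plan is to derive \cref{E3ne0} from the divergence theorem, in complete analogy to how \cref{P2d} was obtained from Stokes' theorem. The key is to find a vector field $\v{G}$ whose divergence is proportional to $\e^{i\q\r}$, so that the volume integral in \cref{E3ffdef} becomes a surface integral over $\partial\Pi=\bigcup_k\Gamma_k$. First I would compute $\Nabla\cdot(\q^*\,\e^{i\q\r})$. Since $\Nabla\e^{i\q\r}=i\q\,\e^{i\q\r}$ and the dot product is bilinear, this equals $i(\q^*\q)\,\e^{i\q\r}=iq^2\e^{i\q\r}$, where $q^2\coloneqq\q^*\q$ is exactly the quantity appearing in the denominator of \cref{E3ne0}. (Note this is $\q^*\cdot\q$, not $\q\cdot\q$, which is why the conjugate is needed for complex $\q$; and $q^2\neq0$ is guaranteed by $q\neq0$.)

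Next I would apply the divergence theorem with $\v{G}\coloneqq\q^*\,\e^{i\q\r}$:
\begin{equation}
  iq^2 F(\q,\Pi)
  = \iiint_\Pi\!\d^3r\,\Nabla\cdot\v{G}
  = \oiint_{\partial\Pi}\!\d^2r\,\n\cdot\v{G}
  = \sum_{k=1}^K \iint_{\Gamma_k}\!\d^2r\,\n_k\cdot\v{G},
\end{equation}
using that $\partial\Pi$ decomposes into the faces $\Gamma_k$, that $\n_k$ is the outward normal on $\Gamma_k$ (as assumed), and that the orientation conventions of \cref{DV,Dplane} make each $\n_k$ the correct outward-pointing surface normal for applying the divergence theorem. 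Since $\q^*$ is a constant vector it pulls out of the surface integral, leaving $\n_k\cdot\q^* \iint_{\Gamma_k}\d^2r\,\e^{i\q\r} = (\q^*\cdot\n_k)\,f(\q,\Gamma_k)$ by \cref{Dff23}. Dividing by $iq^2$ and recognizing that scalar factors commute past the dot product yields \cref{E3ne0}.

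The main obstacle, as in the polygon case, is bookkeeping about orientation rather than analysis: one must check that the sign conventions baked into \cref{Dplane} (the induced orientation of each face) and \cref{DV} (the winding number $+1$ of each $\partial\Gamma_k$ with respect to $\n_k$) are mutually consistent with ``$\n_k$ points outward'' and with Stokes'/divergence orientation, so that no stray sign appears. A secondary point worth stating is that the divergence theorem applies here because $\Pi$ is assumed orientable with piecewise-flat boundary, so its surface is a finite union of polygons with consistent outward normals; no smoothness is lost since $\e^{i\q\r}$ is entire. The analyticity ($q^2$ is a polynomial, $f(\q,\Gamma_k)$ is entire in $\q$ by \cref{P2ser}) also shows that the only apparent singularity of \cref{E3ne0} is at $q=0$, which is the removable singularity covered by \cref{P3lim0} — but a detailed discussion of that belongs to later propositions, not this proof.
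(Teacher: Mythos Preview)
Your proposal is correct and follows essentially the same route as the paper: choose the vector field $\q^*\e^{i\q\r}$, note that its divergence is $iq^2\e^{i\q\r}$, apply the divergence theorem to turn the volume integral into a sum of face integrals, and recognize each face integral as $f(\q,\Gamma_k)$. The paper's own proof is terser (it omits the explicit divergence calculation and the orientation discussion), but the argument is the same.
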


\begin{proof}
  We first address the case $q\ne0$.
  For a polyhedron, the divergence theorem takes the form
  \begin{equation}\label{EGauss}
    \iiint_\Pi\!\d^3r\,\Nabla\v{H}
     = \iint_{\partial\Pi}\!\d^2r\, \n\v{H}
     = \sum_k \n_k \iint_{\Gamma_k}\!\d^2r_\pa\, \v{H}.
  \end{equation}
  With the choice $\v{H}\coloneqq\q^*\,\e^{i\q\r}$, this yields
  \begin{equation}
    i q^2 F(\q,\Pi)
    = \q^*\cdot \sum_{k=1}^{K} \n_k \iint_{\Gamma_k}\!\d^2r_\parallel\, \e^{i\q\r}.
  \end{equation}
  With the notation \cref{E2ffdef}, this proves~\cref{E3ne0}.
\end{proof}

\begin{remark}[volume formula]
  The case $q=0$ is covered by \cref{P2lim0}.
  The volume of a polyhedron can be conveniently computed from
  \begin{equation}\label{E3Vol}
    \Vol(\Pi)
    = \sum_k \n_k \iint_{\Gamma_k}\d^2r_\parallel\,\r/3
    = \frac{1}{3}\sum_k \Ar(\Gamma_k)\,r_{\perp k}.
  \end{equation}
  This formula, which is based on a tetrahedral tesselation.
  can be retrieved from remote literature~\cite[Cap.~II, \S~3, III 171]{Com30},
  or easily be derived by applying the divergence theorem~\cref{EGauss}
  with the choice $\v{H}\coloneqq\r/3$.
\end{remark}

\begin{remark}[face-independent contributions cancel]\label{L3const}
  For a closed surface, the integral of surface normals cancels.
  For a polyhedron, the integral becomes a sum:
  \begin{equation}\label{E3const}
    \sum_{k=1}^{K} \n_k \Ar(\Gamma_k) = 0.
  \end{equation}
  For proof, use the divergence theorem~\cref{EGauss} with a $k$- and $\r$-independent~$\v{H}$.

  In full analogy with~\cref{L2const},
  this cancellation can lead to a loss of accuracy in~\cref{E3ne0}
  if $f(\q,\Gamma_k)/\Ar(\Gamma_k)$ is dominated by
  $k$-independent terms.
  According to the series expansion~\cref{E2sum_f},
  this is indeed the case; the leading term is just
  $f_0(\q,\Gamma_k)=\Ar(\Gamma_k)$.
  Therefore, the analytic result~\cref{E3ne0} is not practicable for small~$q$,
  where it must be replaced by the series expansion~\cref{E3series}.
  Coefficients, unaffected by cancellation, will be provided below in~\cref{P3coeff}.
\end{remark}

\begin{proposition}[expansion coefficients]\label{P3coeff}
  The coefficients of the series expansion \cref{E3series}
  can be computed as
  \begin{equation}\label{E3coeff}
    F_n(\q,\Pi)
    = \frac{1}{q^2} \q^*\cdot \sum_{k=1}^K \n_k f_{n+1}(\q,\Gamma_k).
  \end{equation}
\end{proposition}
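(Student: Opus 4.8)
The plan is to prove \cref{E3coeff} in direct parallel with \cref{P3d}, choosing the auxiliary vector field so that its divergence reproduces the integrand $(\q\r)^n/n!$ of the coefficient integral \cref{E3intcoeff}, rather than the exponential used in the proof of \cref{P3d}.

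Concretely, I would apply the divergence theorem in the form \cref{EGauss} to
\[
  \v{H} \coloneqq \frac{\q^*}{q^2}\,\frac{{(\q\r)}^{n+1}}{(n+1)!}.
\]
Because $\q^*$ is a constant vector and $\Nabla(\q\r)=\q$, one gets at once $\Nabla\v{H}=(\q^*\cdot\q)\,(\q\r)^n/(q^2 n!)=(\q\r)^n/n!$, which is exactly the integrand of \cref{E3intcoeff}. Hence \cref{EGauss} yields
\[
  F_n(\q,\Pi)
  = \sum_{k=1}^K \n_k\cdot\iint_{\Gamma_k}\!\d^2r_\pa\,\v{H}
  = \frac{1}{q^2}\sum_{k=1}^K (\n_k\q^*)\iint_{\Gamma_k}\!\d^2r_\pa\,\frac{{(\q\r)}^{n+1}}{(n+1)!},
\]
where the scalar $\n_k\q^*/q^2$ has been pulled out of each surface integral. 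By the definition \cref{E2intcoeff}, the remaining integral over $\Gamma_k$ equals $f_{n+1}(\q,\Gamma_k)$, and reassembling $\sum_k(\n_k\q^*)(\cdots)=\q^*\cdot\sum_k\n_k f_{n+1}(\q,\Gamma_k)$ gives \cref{E3coeff}. As in \cref{P3d}, this presumes $q\neq0$.

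An equivalent derivation, which re-uses rather than re-derives the divergence theorem, substitutes the absolutely convergent series \cref{E2sum_f} for each face form factor into \cref{E3ne0}, interchanges the sums over $k$ and~$n$, observes that the $n=0$ term carries the factor $\sum_k\n_k f_0(\q,\Gamma_k)=\sum_k\n_k\Ar(\Gamma_k)$, which vanishes by \cref{E3const}, shifts the summation index by one, and matches coefficients against the unique expansion \cref{E3series}. I do not expect a genuine obstacle in either route: the statement is a routine corollary of the divergence theorem, structurally identical to \cref{P3d}. The only items worth a word of care are the standing hypothesis $q\neq0$ needed for the division by $q^2$, and, in the alternative derivation, the legitimacy of exchanging the finite sum over faces with the infinite series --- licensed by the absolute convergence in \cref{P2ser} --- together with the fact that the index shift is clean precisely because the $n=0$ term drops out via \cref{E3const}.
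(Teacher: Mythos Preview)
Your proposal is correct. Your \emph{alternative} derivation---inserting \cref{E2sum_f} into \cref{E3ne0}, exchanging sums by absolute convergence, dropping the $\nu=0$ term via \cref{E3const}, and shifting the index---is exactly the paper's proof.

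Your \emph{primary} route is a genuinely different, and arguably cleaner, argument: you apply the divergence theorem once per coefficient with the tailored field $\v{H}=\q^*(\q\r)^{n+1}/\bigl(q^2(n+1)!\bigr)$, so each $F_n$ is obtained directly from the face integrals $f_{n+1}$. This bypasses both the interchange-of-sums justification and the cancellation argument of \cref{L3const}; the index shift is built into the choice of~$\v{H}$ rather than emerging after a term drops out. The paper's approach, by contrast, has the virtue of reusing \cref{E3ne0} wholesale and of making the role of \cref{E3const} explicit, which ties the expansion back to the numerical-stability discussion in that remark. Both routes assume $q\ne0$, as you note.
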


\begin{proof}
  Insert~\cref{E2sum_f} in~\cref{E3ne0}:
  \begin{equation}
    F(\q,\Pi)
    = \frac{1}{iq^2} \q^*\cdot \sum_{k=1}^K \n_k \sum_{\nu=0}^\infty i^\nu f_\nu(\q,\Gamma_k).
  \end{equation}
  Per \cref{P2ser}, the $f_\nu$ are absolutely convergent.
  Therefore the two sums can be exchanged.
  Per \cref{L3const}, the term $f_0(\q,\Gamma_k)=\Ar(\Gamma_k)$ cancels under the sum over~$k$.
  Therefore the infinite sum may as well start at~$\nu=1$.
  Substitute $n=\nu-1$ to obtain the series~\cref{E3series} with coefficients~\cref{E3coeff}.
\end{proof}

\begin{figure}[t]\label{Fmatch}
  \centering
  \includegraphics[width=.47\linewidth]{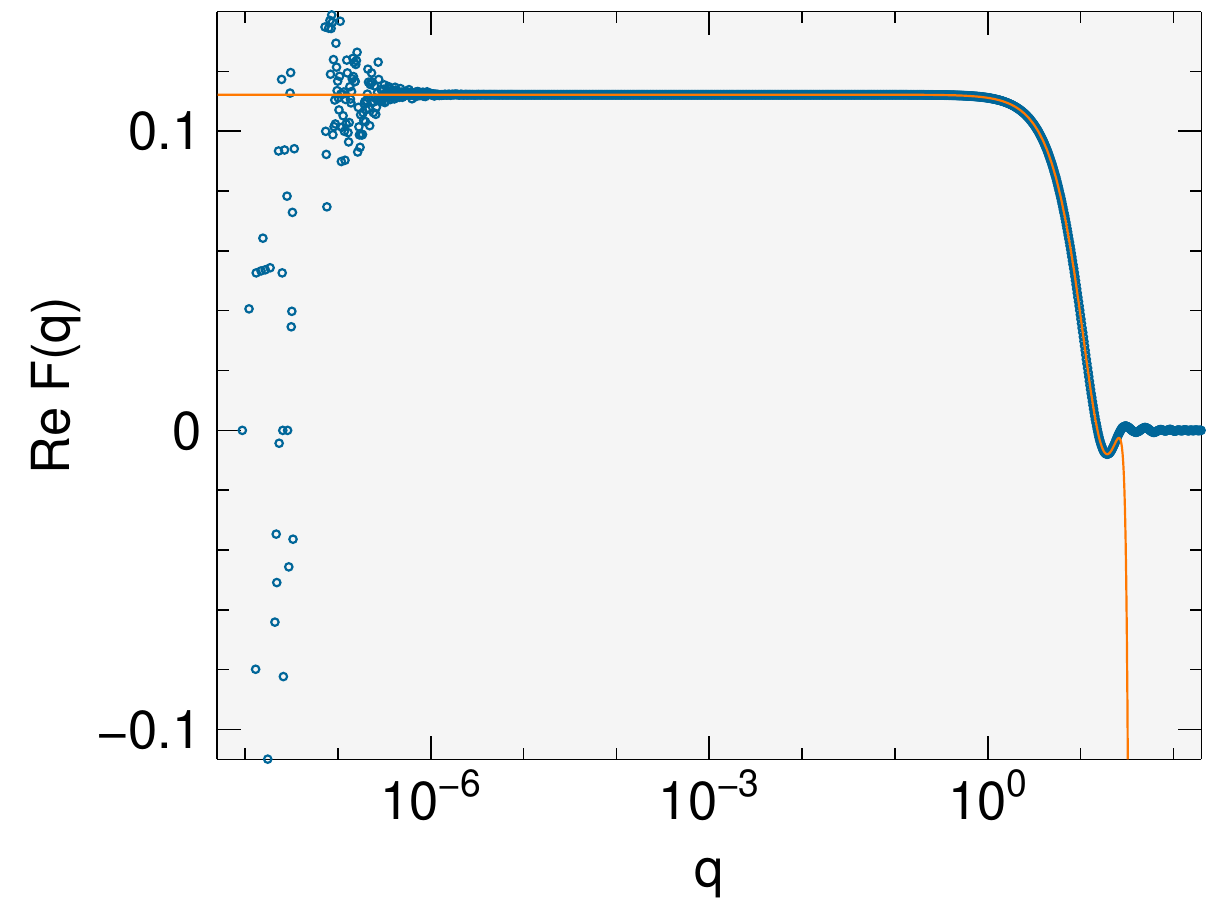}
  \hfill
  \includegraphics[width=.47\linewidth]{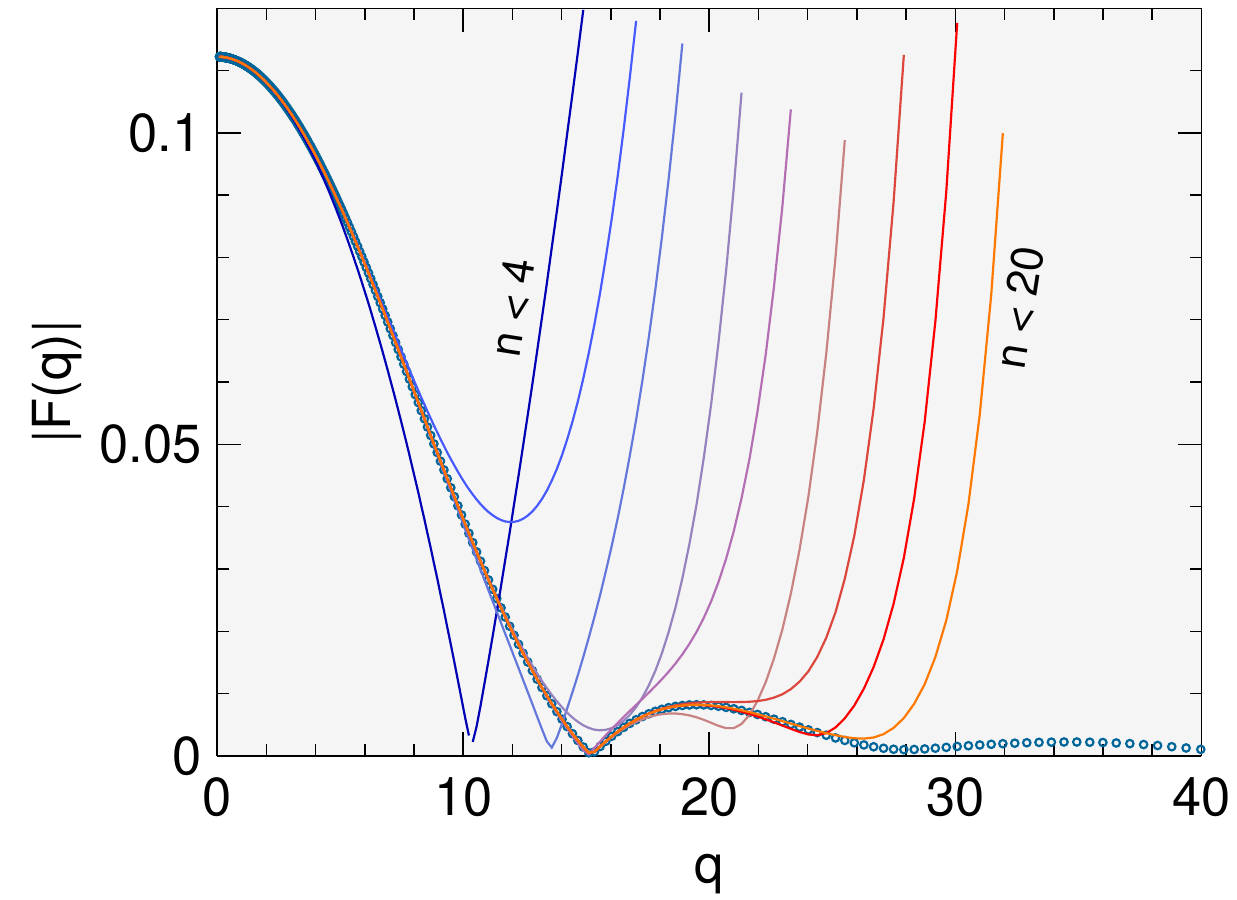}
  \caption{Form factor of a truncated tetrahedron (trigonal pyramidal frustum)
as function of wavenumber~$q$
[The base is an equilateral triangle in the $xy$ plane,
oriented so that an edge points in $y$ direction,
with edge length $L=1$;
the dihedral angle is $72^\circ$; the height $H=L/2$.
Wavevectors are real and off-symmetric, $\v{\hat q}=(1,2,3)/\sqrt{14}$].
(a) $\Real F(\q)$ vs.~$q$. Blue spots computed using the analytic expression
\cref{E2ne0,E3ne0}. The orange line is sum~\cref{E3series} with coefficients~\cref{E3coeff},
ignoring the termination criterion and summing terms up to $n=19$.
(b) $|F(\q)|$ vs.~$q$, now on a linear scale. Blue symbols as before.
Lines are sum~\cref{E3series} up to $n=3,5,\ldots,19$.
}
\end{figure}

\Cref{Fmatch} illustrates the crossover between the domains of analytic computation
and series expansion.
The implementation, described below in \cref{Stests}, uses IEEE double-precision arithmetics.
The analytic form factor~\cref{E3ne0} with per-face contributions~\cref{E2ne0}
fails grossly for $qa\lesssim10^{-5}$,
as was to be expected from \cref{L3const}.
On the other side,
the series expansion~\cref{E3series} requires more and more terms if $qa\gtrsim10$.
This leaves a wide intermediate range where both methods agree well enough
to allow a match with decent accuracy, as further discussed in \cref{Stests}.

\begin{remark}[resulting algorithm]\label{R3final}
  Let us summarize \cref{S3fig,S3poly} by outlining an algorithm for reliably computing
  the form factor of any given polyhedron~$\Pi$,
  given through a set of $K$ polygonal faces~$\Gamma_k$.
  The algorithm involves a preselected constant $C$
  that determines when to use the series expansion instead of the analytic formula.
  Do not use this algorithm if the symmetry of~$\Pi$ allows for a simpler computation,
  as discussed below in \cref{Si}.

  If the input coordinates cannot be trusted,
  check whether the origin lies well inside~$\Pi$.
  Otherwise determine a new origin, for instance from the center of gravity of~$\Pi$,
  and apply \cref{EFTtra} at the end of the computation.

  If $q=0$, then return $F(\v{0},\Pi)=\Vol(\Pi)$.
  If $aq<C$, then sum the series~\cref{E3series},
  using $F_0=\Vol(\Pi)$ and other coefficients from~\cref{E3coeff}.
  Terminate the summation and return the result
  if a term for even~$n$, relative the absolute value of the sum acquired so far,
  falls below the machine epsilon.
  Otherwise compute $F(\q,\Pi)$ according to~\cref{E3ne0} as a weighted sum over
  face form factors. To compute these $f(\q,\Gamma_k)$,
  follow the algorithm described in \cref{R2final}.
\end{remark}

\section{Symmetric figures}\label{Si}

For figures with certain symmetries,
the form factor computation can be considerably simplified.
Besides the speed benefit, this also improves the accuracy
by setting cancelling terms exactly to zero.

\begin{remark}[polygon with symmetry~$S_2$]\label{R2inv}
  If a planar $(2\JJ)$-gon $\Gamma$ has a perpendicular twofold symmetry axis
(Schoenflies group $S_2$), and thereby an inversion center at point~$\r_\perp$,
  then its form factor has the symmetry $f(\q_\perp+\qpa)=f(\q_\perp-\qpa)$.
  Its real-space coordinates transform as
  $\V_{j+\JJ\pa}=-\V_{j\pa}$ for $j=1,\ldots,\JJ$,
  and thereby $\R_{j+\JJ\pa}=-\R_{j\pa}$ and $\E_{j+\JJ}=-\E_{j}$.
  This allows the following simplifications:

  The area can be computed as
  \begin{equation}\label{ESurvey_inv}
    \Ar(\Gamma) = \n\cdot\sum_{j=1}^{\JJ} \V_{j-1}\times\V_{j}.
  \end{equation}
  The form factor for $q_\pa\ne0$ can be computed as
  \begin{equation}\label{E2ne0inv}
    f(\q,\Gamma)
    = \frac{4}{q_\pa^2}\e^{i\qrperp}\qcross^*\cdot
       \sum_{j=1}^{\JJ} \E_j \sinc(\q\E_j) \sin(\qpa\R_j).
  \end{equation}
  In contrast to~\cref{E2ne0}, the term under $\sum_j\E_j$ has no constant contribution,
  but is of order $\q_\pa\R_j$.
  There is no cancellation for $q_\pa\to0$, and no need to use a series expansion
  for the accurate computation of~$f$.
\end{remark}

\begin{remark}[polyhedron with symmetry $C_i$]\label{R3inv}
  If a $(2\KK)$ has an inversion center at the origin~$\v{0}$
  (Schoenflies group $C_i$),
  then its face indices can be chosen such that $\Gamma_{k+\KK}=\v{0}-\Gamma_k$,
  and the face form factors have the symmetry
  $f(\q,\Gamma_{k+\KK})=f(-\q,\Gamma_k)=f(\q,\Gamma_k)^*$.
  The decomposition of $\q$ is invariant: $\q_{k+\KK\perp}=\q_{k\perp}$,
  and idem for $\q_{k\pa}$.
  In contrast, $\q_{k+\KK\times}=-\q_{k\times}$ because it involves $\n_{k+\KK}=-\n_k$.
  With this, the computation of $F(\q,\Pi)$ can be simplified as follows:

  The volume~\cref{E3Vol} can be computed as
  \begin{equation}\label{E3Vol_inv}
    \Vol(\Pi)
    = \frac{2}{3}\sum_{k=1}^\KK \Ar(\Gamma_k)\,r_{\perp k}.
  \end{equation}
  The analytic form factor~\cref{E3ne0} for $q\ne0$ can be computed as
  \begin{equation}\label{E3ne0inv}
      F(\q,\Pi)
      = \frac{1}{iq^2}\, \q^*\cdot \sum_{k=1}^\KK \n_k \,\tilde{f}(\q,\Gamma_k)
  \end{equation}
  with the antisymmetrized form factors of opposite faces
  \begin{equation}\label{E32finv}
      \tilde{f}(\q,\Gamma)
      \coloneqq f(\q,\Gamma)-f(-\q,\Gamma)
      = \frac{4}{iq_\pa^2}\qcross^*\cdot
      \sum_{j=1}^{J} \E_{j} \sinc(\q\E_{j}) \cos(\q\R_{j}),
  \end{equation}
  obtained from~\cref{E2fac} and~\cref{E2ne0}. If $\Gamma$ has symmetry $S_2$ they become
  \begin{equation}\label{E32finvS2}
      \tilde{f}(\q,\Gamma)
      \coloneqq f(\q,\Gamma)-f(-\q,\Gamma)
      = \frac{-8}{q_\pa^2}\qcross^*\cdot
      \sum_{j=1}^{\JJ} \E_{j} \sinc(\q\E_{j}) \sin(\qpa\R_{j}).
  \end{equation}
  If $q_{k\pa}$ is small, then the $\tilde{f}(\q,\Gamma_k)$
  should be computed from the series expansion
  \begin{equation}\label{E2series_inv}
    \tilde{f}(\q,\Gamma)
    = \sum_{n=0}^\infty i^n \tilde{\phi}_n(\q,\Gamma)
  \end{equation}
  with
  \begin{equation}
    \tilde{\phi}_n(\q,\Gamma)
    \coloneqq \e^{i\qrperp}f_n(\qpa) - \e^{-i\qrperp}f_n(-\qpa).
  \end{equation}
  From \cref{E2intcoeff} we see that
  the $f_n$ are even/odd functions of~$\qpa$ (but not of~$\q$) for even/odd~$n$.
  Thereby
  \begin{equation}\label{Eftilde_m}
    \tilde{\phi}_n(\q,\Gamma)
    = \left\{\begin{array}{ll}
       2i\sin(\qrperp)f_n(\qpa,\Gamma) &\text{~for even~$n$,}\\
       2\cos(\qrperp)f_n(\qpa,\Gamma) &\text{~for odd~$n$,}
      \end{array}\right.
  \end{equation}
  with $f_n$ computed using \cref{E2coeff}.
  In the remaining case of small $q$, the coefficients of the series~\cref{E3series}
  take the form
  \begin{equation}\label{E3ne0_inv}
    F_n(\q,\Pi)
    = \frac{1}{iq^2}\, \q^*\cdot \sum_{k=1}^\KK \n_k \:
       [f_{n+1}(\q,\Gamma)-f_{n+1}(-\q,\Gamma)],
  \end{equation}
  which offers hardly any advantage over the original form~\cref{E3coeff}.
\end{remark}

\begin{remark}[prism]\label{Rprism}
  For prisms, it is advisable to factorize~\cref{E3ffdef} from the onset.
  Let a prism~$\Pi$ have a base~$\Gamma_\pa$ that determines a normal~$\n$.
  Let the prism extend in normal direction from $-h/2$ to~$h/2$.
  Then for all~$\q$
  \begin{equation}
    F(\q,\Pi)
    = h\sinc(\q_\perp\n h/2)f(\qpa,\Gamma_\pa).
  \end{equation}
  Per \cref{Rsinc}, no series expansion is needed for $q_\perp\to0$.
\end{remark}

\section{Numeric tests}\label{Stests}

A form factor computation for generic polyhedra, based on all the above,
has been implemented as part of the GISAS simulation package BornAgain \cite{ffp:ba,ffp:ba:pub}.
The source code is available under the GNU General Public License.
All floating-point numbers, internally and externally, have double precision.
Tests were performed on standard PC's with amd64 architecture under Linux,
hence in IEEE arithmetics.

The code underwent extensive tests for internal consistency and for
 compatibility with previous computations.
Those reference computations were based on analytical expressions
obtained by three successive integrations in Cartesian coordinates;
they had been checked against the reference code IsGISAXS \cite{ReLL09,Laz08}.

The internal tests comprise the symmetry, specialization, and continuity tests.
All these tests were performed, as far as applicable, for a suite of particle shapes
(pyramidal frusta with 2-, 3-, 4-, and 6-fold symmetry,
cuboctahedron, truncated cube, regular dodecahedron and icosahedron),
for different wavevector directions $\v{\hat q}$
(along symmetry axis, slightly off such axis, or in completely unsymmetric directions;
purely real or with small imaginary parts),
and for a logarithmically wide range of magnitudes~$q$.
The main result of these tests is the worst-case relative deviation
between two form factor computations,
\begin{equation}
  \delta[F_1,F_2]
  \coloneqq \max_{\q} \frac{|F_1(\q)-F_2(\q)|}{|F_1(\q)+F_2(\q)|/2}
\end{equation}

Symmetry tests are performed for particle shapes
that are invariant under some rotation or reflection~$R$.
They yield $\delta[F(\q),F(R\q)]\lesssim5\cdot10^{-10}$.

In specialization tests,
special parameter sets are chosen for which two otherwise different figures $\Pi_1$, $\Pi_2$
coincide.
For instance, the rectangular base of a pyramid $\Pi_1$ is made a square,
so that $F(\q,\Pi_1)$ can be compared with the form factor of a square pyramid~$\Pi_2$.
Or the dihedral angle of a pyramid is set to $90^\circ$ so that the pyramid
coincides with a prism.
These tests yield $\delta[F(\q,\Pi_1),F(\q,\Pi_2)]\lesssim3\cdot10^{-10}$.

Finally, continuity tests search for possible discontinuities due to a change
in the computational method.
Using a precompiler switch,
additional source code lines are activated that tell the test program
whether the analytic expression or a series expansion was used to compute a form factor,
and in the latter case, at which order the summation was terminated.
For given direction~$\v{\hat q}$,
bisection in~$q$ is used to determine where such a change in computation method happens.
Then, the form factor slightly below and slightly above this threshold is determined.
With $\eta=8\cdot10^{-16}$ chosen as a few times the machine epsilon,
the continuity measure is $\delta_\text{cont}\coloneqq\delta[F(\q(1-\eta)),F(\q(1+\eta))]$.
According to \cref{R2final},
the switch between series expansion and analytic expression is determined by a parameter~$c$.
The optimum value of this parameter is determined empirically so that $\delta_\text{cont}$
is miminimized.

\begin{figure}[htbp]\label{F12sym}
  \centering
  \includegraphics[width=\linewidth]{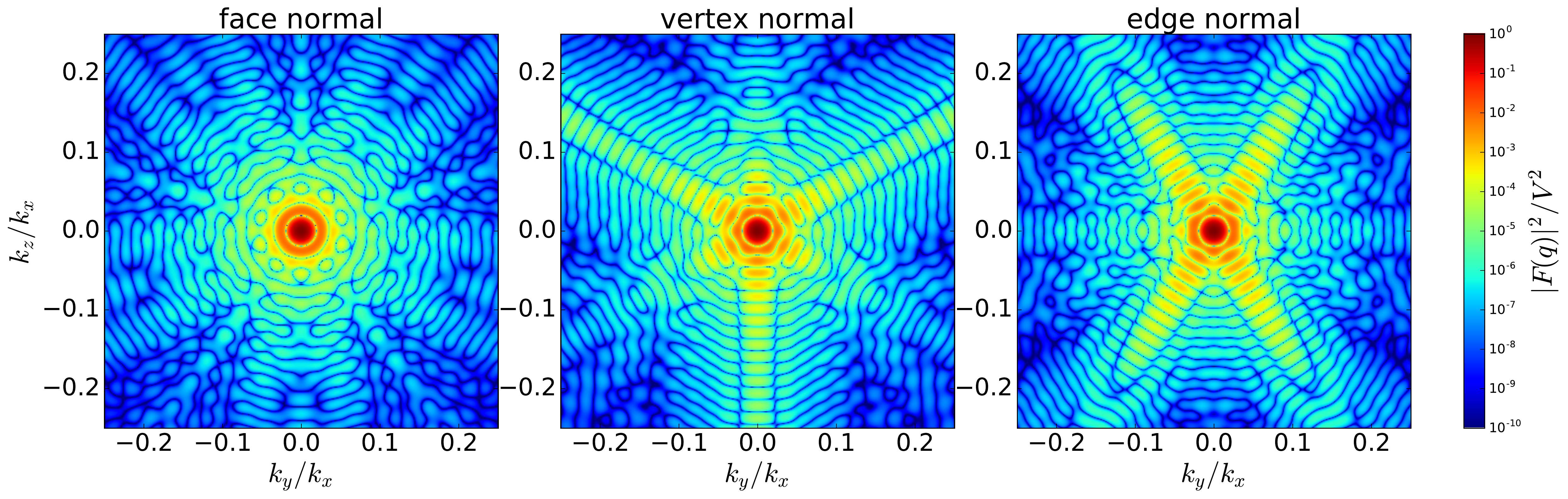}
  \caption{Normalized intensity $|F|^2/V^2$,
computed with $a=4.8$~nm,
for three orientations of high symmetry:
$x$ axis perpendicular to a polygonal face;
vertex on the $x$ axis;
edge in the $xy$ plane and perpendicular to the $x$ axis.}
\end{figure}

\begin{figure}[htbp]\label{F12asy}
  \centering
  \includegraphics[width=\linewidth]{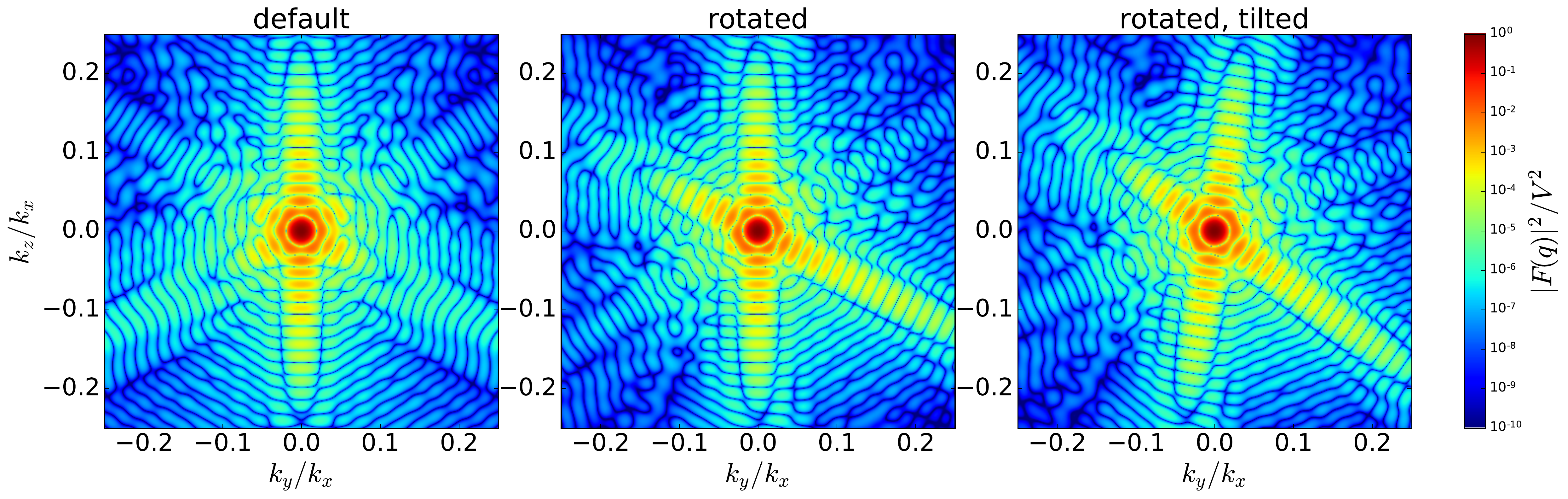}
  \caption{Normalized intensity $|F|^2/V^2$,
computed with $a=4.8$~nm,
for three orientations of decreasing symmetry:
base pentagon in $xy$ plane and pointing in $x$ direction;
rotated by $13^\circ$ around the $z$ axis;
ditto, and tilted by $9^\circ$ around the $x$ axis.}
\end{figure}



\section*{Acknowledgments}
I thank C\'eline Durniak, Walter Van Herck, and Gennady Pospelov
for reference code and for help with the test framework.

\end{document}